\def\N{\mathbb{N}}
\def\R{\mathbb{R}}
\def\MC{\hat{\Pi}_{\textup{MC}}}
\def\MLMC{\hat{\Pi}_{\textup{MLMC}}}
\def\BQ{\hat{\Pi}_{\textup{BQ}}}
\def\MLBQ{\hat{\Pi}_{\textup{MLBQ}}}
\def\MSE{\textup{MSE}}
\def\Cost{\textup{Cost}}
\def\E{\mathbb{E}}
\def\V{\mathbb{V}}
\DeclareMathOperator*{\argmin}{argmin}
\def\Err{\textup{Err}}
\newenvironment{talign*}
{\csname align*\endcsname}
{\endalign}
\newenvironment{talign}
{\align}
{\endalign}
\newtheorem{theorem}{Theorem}
\newtheorem{proposition}[theorem]{Proposition}
\theoremstyle{definition}
\begin{document}

%

%
\runningauthor{Kaiyu Li, Daniel Giles, Toni Karvonen, Serge Guillas, Fran\c{c}ois-Xavier Briol}

\twocolumn[

\aistatstitle{Multilevel Bayesian Quadrature}

\aistatsauthor{ Kaiyu Li \And  Daniel Giles  \And Toni Karvonen}

\aistatsaddress{University College London \And University College London \And  University of Helsinki}

\aistatsauthor{Serge Guillas \And Fran\c{c}ois-Xavier Briol}
\aistatsaddress{ University College London \\ The Alan Turing Institute \And University College London \\ The Alan Turing Institute}
]

\begin{abstract}
Multilevel Monte Carlo is a key tool for approximating integrals involving expensive scientific models. The idea is to use  approximations of the integrand to construct an estimator with improved accuracy over classical Monte Carlo. We propose to further enhance multilevel Monte Carlo through Bayesian surrogate models of the integrand, focusing on Gaussian process models and the associated Bayesian quadrature estimators. We show, using both theory and numerical experiments, that our approach can lead to significant improvements in accuracy when the integrand is expensive and smooth, and when the dimensionality is small or moderate. We conclude the paper with a case study illustrating the potential impact of our method in landslide-generated tsunami modelling, where the cost of each integrand evaluation is typically too large for operational settings.  

\end{abstract}

\section{INTRODUCTION}
\label{sec:Intro}

This paper considers the task of approximating an unknown integral, or expectation, when evaluations of the integrand are expensive, either from a computational or financial point of view. This is a common problem in statistics and machine learning, where one commonly needs to marginalise random variables, compute normalisation constants of probability density functions or compute posterior expectations. However the problem is even more pronounced when doing uncertainty quantification for large mathematical models in science and engineering. For example, a scientist might be uncertain about the value of certain model parameters, and might therefore wish to estimate the expected value of some quantity of interest involving the model with respect to distributions on these parameters. 
\begin{figure}[t!]
\centering
\includegraphics[width=0.23\textwidth]{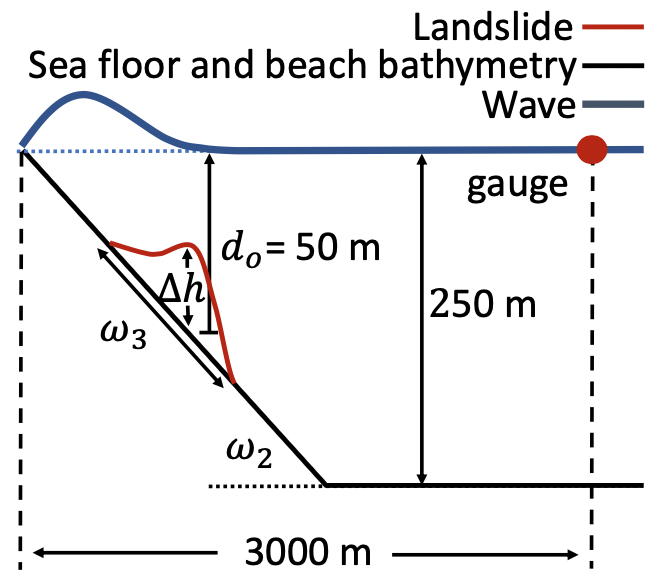}%
      {\includegraphics[width=0.53\linewidth]{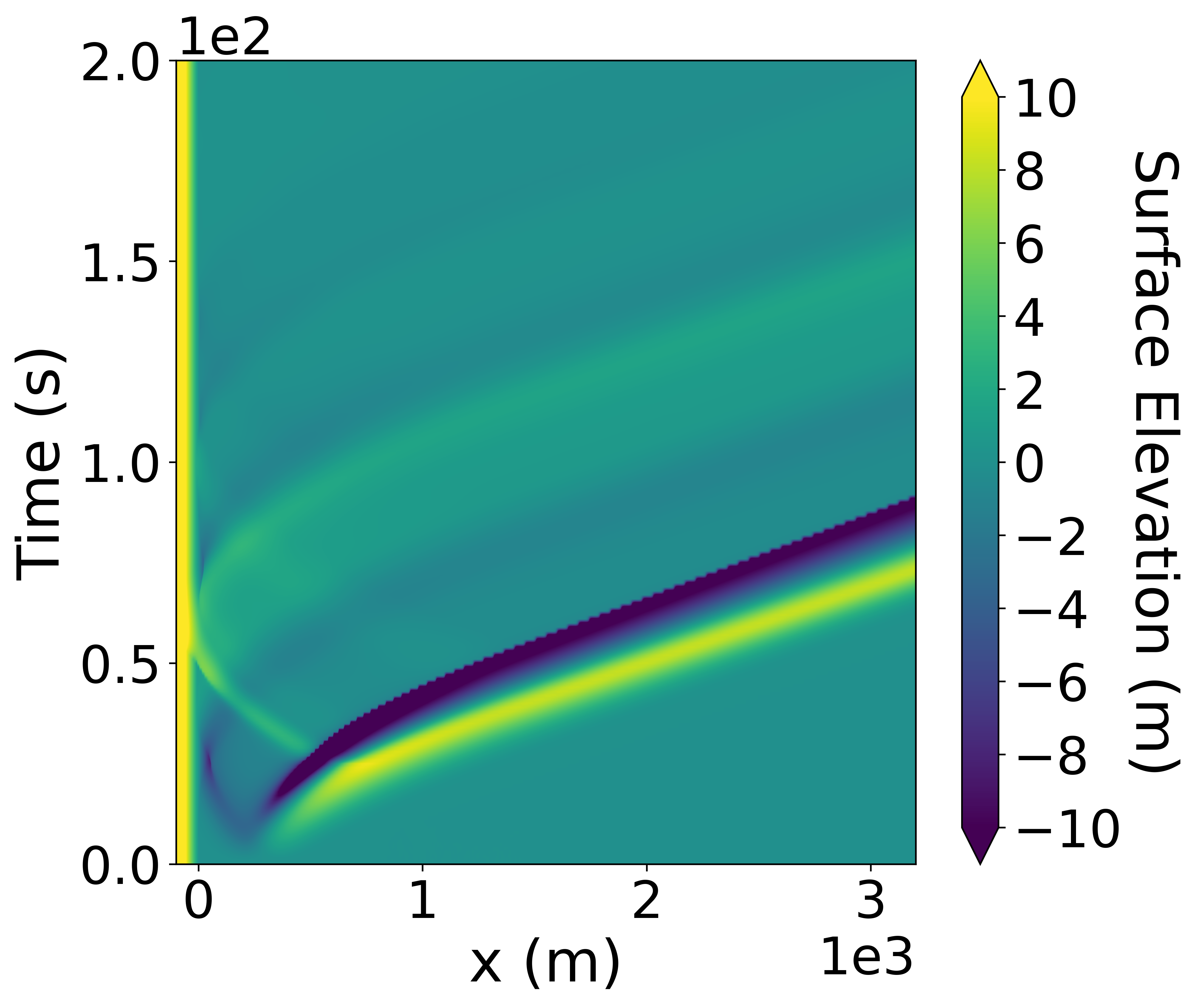}}%
    \label{fig:sketch}%
\caption{Tsunami Model. \emph{Left}: Sketch of the submerged landslide-generated tsunami. \emph{Right:} Solution of the differential equation through time and space.}
\vspace{-2mm}
\end{figure}

An example which illustrates this problem (later revisited in \Cref{sec:result}) is the modelling of landslide-generated tsunamis, where the evolution of the wave through space and time is described through a complex system of differential equations \citep{behrens2015tsunami, reguly2018volna, giles2020performance, Marras2021tsunami}; see Figure~\ref{fig:sketch} for an illustration. In this context, designers of tsunami resistant buildings, prevention structures or early warning systems might be interested in estimating the total wave energy or momentum flux of the tsunami at a fixed location. These quantities are functions of the solution of the differential equations, but there will usually be some uncertainty associated with certain physical parameters, such as those characterising the slope or size of the landslide. This uncertainty is represented through probability distributions, leading to the need to compute the expected value of the quantities of interest. The main challenge is that in order to obtain high accuracy estimates, it is necessary to use very fine time and space meshes to solve the differential equations, leading to prohibitively large computational costs.

A common approach to the approximation of such integrals is Monte Carlo (MC) methods, which include a wide range of simulation-based algorithms. Of particular relevance is \emph{multilevel Monte Carlo} (MLMC) \citep{giles2015multilevel} and its various extensions \citep{giles2009multilevel,Dick2011,kuo2015multi,kuo2017multilevel}. MLMC is designed for expensive integrands where cheap approximations are available at several levels of accuracy. Such models are called \emph{multifidelity models}  \citep{Peherstorfer2018}, and are widely used, including for atmospheric dispersion modelling \citep{katsiolides2018multilevel}, biochemical reaction network modelling \citep{Warne2019}, reliability theory \citep{Aslett2017}, erosion and flood risk modelling \citep{clare2021assessing}, pricing in finance \citep{dempster2018high}, wind farm modelling \citep{Kirby2023}, the design of advanced aerospace vehicles \citep{geraci2017multifidelity}, or tsunami modelling \citep{sanchez2016uncertainty}. 

MLMC evaluates the cheap but inaccurate approximation of the integrand a large number of times, and only evaluates the high-accuracy but expensive approximation of the integrand a small number of times. For the tsunami example above,  standard MC would use a fine time and space mesh, and evaluate the integrand at fixed high accuracy level. In contrast, MLMC will use several approximations with different meshes (each corresponding to a level), and use fewer evaluations of the expensive levels. For a fixed computational budget, this allows MLMC to obtain much more accurate estimate than standard MC. Beyond the scientific application areas above, this has also led MLMC to be used to enhance computational tools including Markov chain Monte Carlo \citep{Dodwell2019,Wang2022}, particle filters  \citep{gregory2017seamless}, approximate Bayesian computation \citep{Jasra2019}, Bayesian experimental design \citep{Goda2020} or variational inference \citep{Shi2021,Fujisawa2021}.

Unfortunately, most multilevel methods suffer from the fact that they are simulation-based methods which neglect all known properties of the integrand. This makes the methods widely applicable, but  means that their convergence rate will be slow when the integrand satisfies nice regularity conditions. This is clearly sub-optimal when working with  expensive models, where the number of evaluations will be limited. In this work, we propose to enhance MLMC through the use of surrogate models which encodes properties of the integrand, such as smoothness, sparsity or even periodicity. We focus in particular on Gaussian processes (GPs), which naturally lead to a class of algorithms that we call \emph{multilevel Bayesian quadrature} (MLBQ). 

MLBQ is a Bayesian probabilistic numerical method \citep{Hennig2015,Cockayne2017BPNM,Wenger2021,Hennig2022}, and more specifically a specific Bayesian quadrature algorithm (BQ; \citealp{Diaconis1988,o1991bayes,Rasmussen2003}); see \cite{briol2019probabilistic} for a recent overview. As we will see in the remainder of the paper, this approach can lead to a posterior distribution on the value of the integral, with (i) significant improvements in accuracy over existing methods when using the posterior mean as a point estimate, and (ii) the ability to quantify our uncertainty (given limited integrand evaluations) over the value of the integral.

\section{BACKGROUND}
\label{sec:background}

We now review key components of our approach: MC, multilevel models, MLMC and BQ.

\paragraph{Monte Carlo Methods}
Let $\Pi$ be a probability distribution on $\Omega \subseteq \mathbb{R}^d$ ($d \in \mathbb{N}_+$) and let $f \colon \Omega \to \mathbb{R}$ be some integrand of interest.
We focus on approximating
\begin{talign*}
    \Pi[f] \coloneqq \int_\Omega f(\omega) \Pi(d\omega)
\end{talign*}
and assume that $f$ is square integrable with respect to $\Pi$ (i.e.\ $\Pi[f^2] < \infty$). To tackle this task, we use pointwise evaluations of $f$: $\{\omega_i,f(\omega_i)\}_{i=1}^n$ for $n \in \N_+$ and $\omega_i \in \Omega$ for $i \in \{1,\ldots,n\}$. For example, a MC estimator \citep{robert2004monte,rubinstein2016simulation} takes the form
\begin{talign*}
     \MC[f] \coloneqq \frac{1}{n} \sum_{i=1}^n f(\omega_i), 
\end{talign*}
where $\{\omega_i\}_{i=1}^n \sim \Pi$; that is, $\{\omega_i\}_{i=1}^n$ are independent and identically distributed (IID) realisations from $\Pi$.
As $n \rightarrow \infty$ and under mild regularity conditions, MC estimators converge to $\Pi[f]$, making these approaches widely applicable. However, their performance when $n$ is finite and relatively small can be quite poor, which is a common issue when $f$ is expensive to evaluate. Alternative equal-weight estimators suffering from similar drawbacks include quasi-Monte Carlo (QMC) or randomised QMC \citep{owen2013monte}, which use $\{\omega_i\}_{i=1}^n$ that form a space-filling design.

\paragraph{Multilevel Monte Carlo}

For multifidelity models, we can improve on MC through MLMC.
Suppose that $f_L=f$, and $f_l \colon \Omega \rightarrow \R$ for $l \in \{0,\ldots,L-1\}$ are approximations of $f$ which increase both in accuracy and cost with the level $l$.
The integral of interest can be expressed through a telescoping sum as
\begin{talign}\label{eq:telescopic}
    \Pi[f] =\Pi[f_L] &= \Pi[f_0]+\sum^L_{l=1} \Pi[f_l-f_{l-1}].
\end{talign}
Instead of using a single MC estimator for $\Pi[f]$, we can estimate each term in the sum separately. Suppose that $\{\{\omega_{(l, i)}\}_{i=1}^{n_l}\}_{l=0}^L \sim \Pi$, the MLMC estimator is
\begin{talign*}
   \MLMC[f] \coloneqq{}& \MC[f_0] + \sum_{l=1}^L \MC[f_l -f_{l-1} ] \\
   ={}& \frac{1}{n_0}\sum^{n_0}_{i=1}f_0(\omega_{(0,i)}) \\
   &+  \sum^L_{l=1}\frac{1}{n_l}\sum^{n_l}_{i=1} (f_l(\omega_{(l,i)})-f_{l-1}(\omega_{(l,i)})).
\end{talign*}
For expensive integrands, there are two main advantages to this approach over MC. Firstly, each integrand (but the first) in the telescoping sum is of the form $f_l - f_{l-1}$, which will have low variance since we expect $f_l \approx f_{l-1}$ and hence $\V[f_l - f_{l-1}] \approx \V[0] = 0$. As a result, a small $n_l$ is sufficient to estimate such terms accurately through MC. Secondly, we have assumed that the functions are cheaper to evaluate for small $l$, so some of the initial terms in the sum can be estimated accurately through MC estimation with a large $n_l$. 

These remarks can be made precise by considering the computational cost necessary to obtain a given accuracy $\varepsilon$, or equivalently a given mean-squared error (MSE) $\varepsilon^2$. For an estimator $\hat{\Pi}[f]$, denote by $\Cost(\hat{\Pi},\varepsilon)$ this cost and by $\MSE(\hat{\Pi}) \coloneqq \mathbb{E}[(\hat{\Pi}[f] -\Pi[f])^2]=\mathbb{V}[\hat{\Pi}[f]] + (\mathbb{E}[\hat{\Pi}[f]]-\Pi[f])^2$ the MSE, where $\E$ and $\V$ denote the mean and variance with respect to all random variables in the estimator. For MC, $\E[\MC[f]]=\Pi[f]$ and $
\MSE(\MC) = \V[\MC[f]] = n^{-1} \mathbb{V}[f]$. To achieve a MSE of $\varepsilon^2$, $n$ should  be at least $\varepsilon^{-2}\mathbb{V}[f]$. If $C$ is the computational cost per sample, a MSE of $\varepsilon^2$ will lead to $\Cost(\MC,\varepsilon) = \varepsilon^{-2}\mathbb{V}[f] C$.

As we will now see, MLMC can provide significant improvements over MC. Let $C_0$ denote the cost of $f_0$, $C_l$ the cost of  $f_l-f_{l-1}$, $V_0 = \V[f_0]$ and $V_l = \V[f_l - f_{l-1}]$. The total cost of MLMC is $\sum_{l=0}^L n_l C_l$. The MSE and cost to achieve a MSE of $\varepsilon^2$ are hence
\begin{talign*}
\MSE(\MLMC) &= \V[\MLMC[f]] = \sum_{l=0}^L n_l^{-1}V_l, \\
\Cost(\MLMC,\varepsilon) &= \varepsilon^{-2}(\sum^L_{l=0}\sqrt{V_lC_l})^2.
\end{talign*}
To compare this cost with that of MC, we will consider two cases.
Firstly, if $V_lC_l$ increases rapidly with levels, we will have  $\Cost(\MLMC,\varepsilon) \approx \varepsilon^{-2} V_LC_L$. Secondly, if $V_lC_l$ decreases rapidly with levels, $\Cost(\MLMC,\varepsilon) \approx \varepsilon^{-2} V_0C_0$. In contrast, for standard MC, assuming the cost of evaluating $f_L$ is similar to that of evaluating $f_L -f_{L-1}$ and the variance of the estimate is $\V[f] = \V[f_L] \approx \mathbb{V}[f_0]$,  we have $\Cost(\MC,\varepsilon) \approx \varepsilon^{-2}V_0C_L$. Since $V_0 > V_L$ and $C_L > C_0$, we will therefore have $\Cost(\MC,\varepsilon) > \Cost(\MLMC,\varepsilon)$ regardless of the behaviour of $V_l C_l$.

This analysis of MLMC can be extended to find the optimal sample sizes per level given a fixed computational cost $T$ (see Appendix \ref{appendix_mlmc_optimN} or \citealp[Section~1.3]{giles2015multilevel} for a similar analysis with optimal sample sizes for a fixed MSE):
\begin{talign*}
n^{\text{MLMC}} &= \left(n^{\text{MLMC}}_0, \ldots, n^{\text{MLMC}}_L\right)\\ &\coloneqq \left(D  \sqrt{\frac{V_0}{C_0}}, \ldots, D  \sqrt{\frac{V_L}{C_L}}\right)
\end{talign*} 
where $D = T (\sum_{l^\prime=0}^L \sqrt{V_{l^\prime} C_{l^\prime}})^{-1}$. Despite the potential advantages of the approach above, there are also limitations which prevent the direct use of $n^{\text{MLMC}}_l$. Firstly, $V_l$ is usually unknown. It could be estimated from data, but unfortunately estimates of $V_l$ may be unreliable if the sample size at level $l$ is small. Secondly, $f_L$ is usually an approximation to $f$ (as opposed to $f=f_L$). Thirdly, as for our tsunami example, the number of levels can often be chosen by the user and it is often unclear how to decide which approximations $f_0, \ldots, f_L$ to include.

\paragraph{Bayesian Quadrature}

Clearly, the MLMC estimator can lead to significant gains, but we note that it focuses solely on sampling from $\Pi$ and does not utilise properties of~$f$. This is in contrast to BQ, an approach to integration which is based on a GP model of $f$. GPs are widely used as models for deterministic but computationally expensive functions, especially in computer experiments \citep{santner2003design, sacks1989design} and in spatial statistics \citep{stein1999interpolation}. We will denote a GP by $\mathcal{GP}(m, c)$ to emphasise the mean function $m: \Omega \to \mathbb{R}$ and the (symmetric and positive semi-definite) covariance function $c \colon \Omega \times \Omega \to \mathbb{R}$ (also called kernel), which uniquely identify the model. Given a $\mathcal{GP}(m, c)$ prior on $f$ and some observations $\{\omega_i,f(\omega_i)\}_{i=1}^n$ at pairwise distinct $\{\omega_i\}_{i=1}^n \subset \Omega$ for some $n \in \N_+$, the posterior on $f$ is also a GP with mean and covariance \citep{williams2006gaussian}
\begin{talign*}
    \tilde{m}(\omega)&=m(\omega)+c(\omega,W)c(W,W)^{-1}(f(W)-m(W)), \nonumber \\
    \tilde{c}(\omega,\omega')&=c(\omega,\omega')-c(\omega,W)c(W,W)^{-1}c(W,\omega')
\end{talign*}
 for all $\omega,\omega' \in \Omega$. Here, we use vectorised notation: $W=(\omega_1,\omega_2,\ldots,\omega_n)^\top$, $f(W)=(f(\omega_1),f(\omega_2),\ldots,f(\omega_n))^\top$, $c(\omega, W)=c(W,\omega)^\top=(c(\omega,\omega_1),\ldots,c(\omega,\omega_n))$ and $(c(W,W))_{i,j}=c(\omega_i,\omega_j)$ for all $i,j \in \{1,\ldots,n\}$
 .
 
 Prior knowledge on $f$, such as smoothness and periodicity, can be incorporated by specifying $m$ and~$c$.  For example, the squared exponential covariance function $c_\text{SE}(\omega,\omega^\prime)=\exp\left( -\|\omega-\omega'\|_2^2/\gamma^2\right)$ with length-scale $\gamma > 0$ implies a prior belief that $f$ has infinitely many derivatives. Alternatively, the Mat\'ern covariance function $
c_{\text{Mat\'ern}}(\omega,\omega^\prime)= 2^{1-v} \Gamma^{-1}(v)\left(\sqrt{2v}\|\omega-\omega^\prime\|_2 / \gamma\right)^v K_v\left(\sqrt{2v}\|\omega-\omega^\prime\|_2/ \gamma\right)$ with smoothness $v>0$ and length-scale $\gamma>0$, where $K_v$ is a modified Bessel function of the second kind, implies a belief that $f$ is $\lceil v \rceil-1$ times differentiable.

BQ \citep{Diaconis1988,o1991bayes,Rasmussen2003,briol2019probabilistic} is an estimator for $\Pi[f]$ motivated through Bayesian inference. The idea is to specify a prior on $f$, obtain the posterior on $f$ given evaluations of $f$, then consider the implied (pushforward) posterior on $\Pi[f]$. The most common approach uses a $\mathcal{GP}(m, c)$ prior; in that case, the posterior on $\Pi[f]$ is Gaussian with mean and variance
\begin{talign*}
    \E_{\text{BQ}}[\Pi[f]] &= \BQ[f] = \Pi[\tilde{m}] \\
    &\hspace{-0.8cm}= \Pi[m]+\Pi[c(\cdot,W)]c(W,W)^{-1} (f(W)-m(W)),\nonumber \\
    \V_{\text{BQ}}[\Pi[f]]  &= \Pi [\Pi[\tilde{c}]] \\
    & =\Pi[\Pi[c]]-\Pi[c(\cdot,W)]c(W,W)^{-1}\Pi[c(W,\cdot)],  \nonumber
\end{talign*}
where $\Pi[c(\cdot,W)]=(\Pi[c(\cdot,\omega_1)],\ldots,\Pi[c(\cdot,\omega_n)])^\top$ and we use the convention that for a function with two inputs, $\Pi[\Pi[\cdot]]$ always denotes integration once with respect to each input. In contrast with MC methods which rely on central limit theorems, $\V_{\text{BQ}}[\Pi[f]]$ can quantify our uncertainty about $\Pi[f]$ for finite (and possibly small) $n$. 

A particular advantage of the formulae above is that they are defined for arbitrary $\{\omega_i\}_{i=1}^n$. A number of point sets have been studied including IID \citep{Rasmussen2003}, QMC \citep{briol2019probabilistic,Jagadeeswaran2018}, realisations from determinental point processes \citep{Bardenet2019}, point sets with symmetry properties \citep{Karvonen2017symmetric,Karvonen2019} and adaptive designs \citep{Osborne2012active,Gunter2014,briol2015frank}.
For specific point sets and GP priors, $\BQ[f]$ also coincides with classical quadrature rules \citep{Diaconis1988,Karvonen2017classical}.

The two main disadvantages of BQ are that: (i) as per GPs, the computational cost is $\mathcal{O}(n^3)$, due to the need to invert $n \times n$ matrices, and (ii) $\Pi[c(\cdot,\omega)]$ for $\omega \in \Omega$ and $\Pi[\Pi[c]]$ are only tractable for some pairs of distributions and covariance functions (see Table 1 in  \citealp{briol2019probabilistic}). On the other hand, BQ also has much faster convergence rates than classical Monte Carlo methods when $d$ is small or moderate \citep{briol2019probabilistic,kanagawa2020convergence,wynne2021convergence}. For this reason, BQ has mostly been applied to problems where $n$ is constrained to be small (for example when the integrand is expensive) and the integration measure is relatively simple. This includes problems in global illumination in computer graphics \citep{Brouillat2009}, cardiac modelling \citep{Oates2017heart}, engineering control \citep{Paul2016}, econometrics \citep{Oettershagen2017}, risk \citep{CADINI201615}, likelihood free inference \citep{Bharti2023} and in variational inference \citep{Acerbi2018}.

\section{METHODOLOGY}
\label{sec:method}

Although MLMC is particularly well-suited to integrals involving multifidelity models, it usually disregards any prior information on the integrand. 
We now remedy this issue by designing a novel estimator which combines the advantages of BQ and MLMC. Our proposed algorithm is relatively straightforward: it uses the telescopic sum in Equation \eqref{eq:telescopic} and approximates each of the terms through BQ rather than MC. Here and throughout the remainder of the paper, we use the convention that $f_{-1} \equiv 0$ to simplify all expressions. Suppose we have access to the evaluations $\{\{f_l(\omega_{(l,i)})-f_{l-1}(\omega_{(l,i)})\}_{i=0}^{n_l}\}_{l=0}^L$ of the approximate integrands on  $\Omega$. We will specify a sequence of priors such that $\mathcal{GP}(m_l,c_l)$ is a prior on the increment $f_l-f_{l-1}$, and we will take these increments to be independent a-priori.
\begin{proposition} 
\label{BMLMC_sum}
Given the priors and datasets described above, the posterior on $f$ is a Gaussian process and the posterior on $\Pi[f]$ is a univariate Gaussian with mean
\begin{talign*}
   \E_{\textup{MLBQ}}[\Pi[f]] &\coloneqq \sum_{l=0}^L \BQ[f_l -f_{l-1} ]  \\
     & \hspace{-1.8cm} = \sum_{l=0}^L \big( \Pi[m_l] + \Pi[c_l(\cdot,W_l)]c_l(W_l,W_l)^{-1} \big. \\ 
     &\hspace{1.15cm} \big. \times (f_l(W_l)-f_{l-1}(W_l)-m_l(W_l)) \big)
\end{talign*}
and variance
\begin{talign*}
    \V_{\textup{MLBQ}}[\Pi[f]] &\coloneqq \sum^{L}_{l=0}\mathbb{V}_{\textup{BQ}}[\Pi[f_l-f_{l-1}]]\\
   & \hspace{-1.9cm} = \! \sum_{l=0}^L \! \big( \Pi[\Pi[c_l]] \! - \! \Pi[c_l(\cdot,W_l)]c_l(W_l,W_l)^{-1}\Pi[c_l(W_l,\cdot)] \big)
\end{talign*}
where $W_l=(\omega_{(l,1)},\ldots,\omega_{(l,n_l)})^\top$ for $l \in \{0,\ldots,L\}$.
\end{proposition}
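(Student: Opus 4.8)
The plan is to exploit the independence of the increment priors together with the telescoping identity \eqref{eq:telescopic}, reducing the claim to a level-by-level application of the standard BQ formulae already recorded in the background, followed by the elementary fact that a sum of independent Gaussians is Gaussian.

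First I would set $g_l \coloneqq f_l - f_{l-1}$ for $l \in \{0,\ldots,L\}$ (with $f_{-1} \equiv 0$), so that the telescoping sum gives $f = f_L = \sum_{l=0}^L g_l$ exactly, and by assumption $g_l \sim \mathcal{GP}(m_l, c_l)$ with the $g_l$ mutually independent a priori. Because the dataset at level $l$ consists solely of evaluations of $g_l$ at $W_l$, conditioning on the full collection of observations factorises across levels: the posterior for each $g_l$ depends only on its own data, and the posteriors remain mutually independent. Each conditional is given by the GP conditioning formula, so that $g_l \mid \text{data} \sim \mathcal{GP}(\tilde m_l, \tilde c_l)$ with $\tilde m_l, \tilde c_l$ as in the background. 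Since a finite sum of independent GPs is again a GP, the posterior on $f = \sum_l g_l$ is a GP with mean $\sum_l \tilde m_l$ and covariance $\sum_l \tilde c_l$, which establishes the first claim.

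Next I would push forward through the integration functional. Integration against $\Pi$ is linear, hence $\Pi[f] = \sum_{l=0}^L \Pi[g_l]$, and applying the BQ pushforward at each level shows that $\Pi[g_l] \mid \text{data}$ is univariate Gaussian with mean $\BQ[g_l]$ and variance $\V_{\text{BQ}}[\Pi[g_l]]$, precisely the summands in the statement. Independence of the posteriors on the $g_l$ transfers to independence of the scalars $\Pi[g_l]$, and a sum of independent univariate Gaussians is Gaussian with additive means and additive variances; this yields the stated $\E_{\text{MLBQ}}$ and $\V_{\text{MLBQ}}$ after substituting the explicit BQ formulae for each level.

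The argument is largely bookkeeping; the only point that warrants care is justifying that the integrals defining $\Pi[g_l]$, $\Pi[c_l(\cdot, W_l)]$ and $\Pi[\Pi[c_l]]$ are well-defined and that integration is a legitimate (measurable, square-integrable) linear functional of each posterior GP, so that the pushforward is genuinely Gaussian rather than merely matching the first two moments. I expect this integrability check, rather than the algebra of summing Gaussians, to be the main but mild obstacle, and it is discharged by the same assumptions that make single-level BQ well-posed.
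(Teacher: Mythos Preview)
Your proposal is correct and matches the paper's argument in spirit: both exploit the a-priori independence of the increments and the level-specific nature of the data to factorise the posterior, then sum the resulting independent Gaussians. The only cosmetic difference is that the paper carries this out by writing the full joint Gaussian of the stacked observations together with $\sum_l (f_l - f_{l-1})(W_*)$, observes the block-diagonal covariance, and applies the multivariate conditional formula explicitly, whereas you invoke the factorisation of conditioning under independence and then sum; the two are the same computation at different levels of abstraction.
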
 
The proof is given in Appendix \ref{append:ProofProp1}. Once again, a point estimator can be obtained through the posterior mean $\MLBQ[f] \coloneqq \E_{\text{MLBQ}}[\Pi[f]]$ and we will call this the \emph{multilevel Bayesian quadrature} (MLBQ) \emph{estimator}. Although MLBQ requires only a straightforward modification of the MLMC algorithm, we will see in the remainder of the paper that it will allow us to take advantage of the properties of both MLMC and BQ. 

A simple illustration example comparing BQ and MLBQ ($L=2$) with the same evaluation constraint is shown in Figure \ref{fig:Illustration}. We used the approximations from the Poisson equation experiment in Section \ref{sec:result} and Appendix \ref{Append: FEM}. As we observed, the GP for MLBQ fits $f_2$ better than the GP for BQ. The MLBQ estimator has smaller error and smaller variance than the BQ estimator.
\begin{figure*}[t!]
    \centering
    \includegraphics[width=\textwidth]{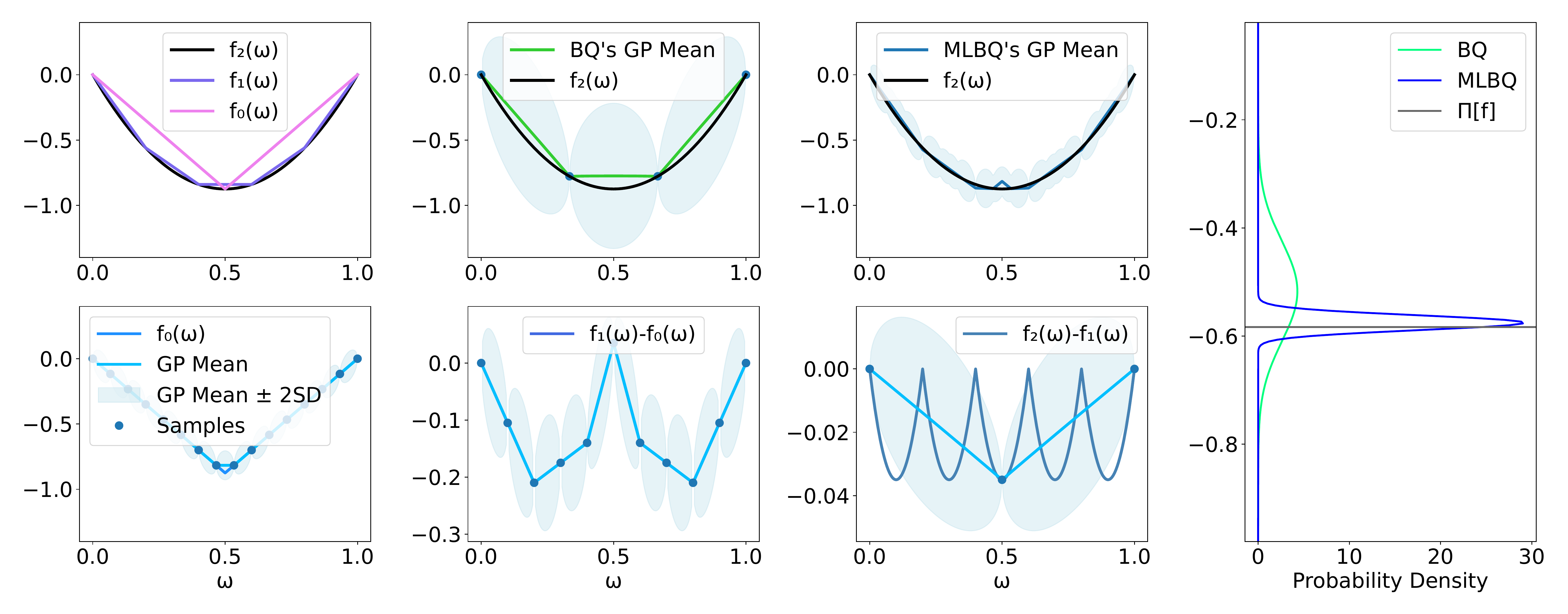}
    \caption{Illustration Example: \textit{Upper Left}: the approximations to $f$, GP for BQ, GP for MLBQ. \textit{Bottom Left}: GP for level 0, 1, 2 of MLBQ. \textit{Right}: BQ and MLBQ esitmators.}    \label{fig:Illustration}
\end{figure*}
The cost for implementing MLBQ is $\mathcal{O}(\sum_{l=0}^L n_l^3)$, which is larger than the $\mathcal{O}(\sum_{l=0}^L n_l)$ of MLMC. However, for most multifidelity models, we expect these costs to be dwarfed by the cost of function evaluations which is $\sum_{l=0}^L n_l C_l$. Additionally, we will see in the next Section that MLBQ can have a much faster convergence rate than MLMC. Due to the independence assumption, we can estimate the GP hyperparameters separately for each level; see Section \ref{append: hyper_prior}.  If the assumption is violated, we could be under- or over-estimating our uncertainty. It is possible to do away with this assumption by modelling levels jointly as demonstrated in Appendix \ref{append:correlated_case} or following the work on multi-output BQ of \citet{Xi2018MultiOutput}, but this would prohibitively increase the cost to $\mathcal{O}((\sum_{l=0}^L n_l)^3)$.

\section{THEORY}
\label{sec:theory}

We now prove an upper bound on the error of MLBQ and derive the optimal number of samples per level. 

Let $L^2(\Omega)$ denote the space of square-integrable functions on $\Omega \subseteq \mathbb{R}^d$ with respect to the Lebesgue measure.
The Sobolev space $W^\alpha_2(\Omega)$ of integer order $\alpha \geq 0$ consists of functions $f \in L^2(\Omega)$ for which $\|f\|_{\alpha} \coloneqq  (\sum_{\beta\in \mathbb{N}^d\, : \, \lvert\beta\rvert \leq \alpha } \|D^\beta f\|^2_{L^2(\Omega)} )^{1/2} < \infty$,
where $\lvert \beta \rvert = \sum^d_{i=1} \beta_i$ and $D^\beta f$ is the weak derivative~\citep[p.\@~22]{adams2003sobolev} of order $\beta$.
For non-integer $\alpha \geq 0$, the Sobolev norm can be defined via Fourier transforms and the two definitions coincide, up to a constant, for integer $\alpha$ if $\Omega$ is sufficiently regular~\citep[Section~2.2]{wynne2021convergence}.
The space $W_2^\alpha(\Omega)$ is a Hilbert space.

By the Moore--Aronszajn Theorem~\citep[Theorem~3 in Chapter~1]{Berlinet2004}, every positive semi-definite covariance function $c \colon \Omega \times \Omega \to \mathbb{R}$ induces a unique reproducing kernel Hilbert space (RKHS) $\mathcal{H}(c)$  consisting of functions $f \colon \Omega \to \mathbb{R}$ and equipped with an inner product $\langle \cdot, \cdot \rangle_{\mathcal{H}(c)}$ and norm $\| \cdot \|_{\mathcal{H}(c)}$.
 The RKHS satisfies: (1) $c(\cdot, \omega) \in \mathcal{H}(c)$ for every $\omega \in \Omega$, and (2) the reproducing property that $f(\omega)=\langle f,c(\cdot, \omega )\rangle_{\mathcal{H}(c)}$ for every $f \in \mathcal{H}(c)$ and $\omega \in \Omega$.
 
The following assumptions are used in our results:
\begin{enumerate}
\item[A1.] The domain is of the form $\Omega = \Omega_1 \times \cdots \times \Omega_d$ for each $\Omega_i$ a non-empty interval.
\item[A2.] The distribution $\Pi$ has a bounded density function $\pi$; i.e. $\| \pi \|_{L^\infty(\Omega)} \coloneqq \sup_{ \omega \in \Omega } \pi(\omega) < \infty$.
\item[A3.] For each $l \in \{0, \ldots, L\}$, the RKHS $\mathcal{H}_l \coloneqq \mathcal{H}(c_l)$ is norm-equivalent to  $W_2^{\alpha_l}(\Omega)$ for $\alpha_l > d/2$. Two Hilbert spaces $H_1$ and $H_2$ are norm-equivalent if and only if they are equal as sets and there are constants $b_1,b_2 >0$ such that $b_1 \| f \|_{H_1} \leq \| f \|_{H_2} \leq b_2 \| f \|_{H_1}$ for all $f \in H_1 = H_2$.
\item[A4.] There are $\beta_0,\ldots,\beta_L > d/2$ such that $f_0 \in W_2^{\beta_0}(\Omega)$ and $f_l, f_{l-1} \in W_2^{\beta_l}(\Omega)$ for every $l \in \{1, \ldots, L\}$.
\item[A5.] For each $l \in \{0, \ldots, L\}$, the fill-distance $h_{W_l, \Omega} = h_{l, \Omega} \coloneqq \sup_{ \omega \in \Omega} \min_{i = 1,\ldots, n_l} \| \omega - \omega_{(l,i)} \|_2$ satisfies $h_{l, \Omega} \leq h_\textup{qu} n_l^{-1/d}$ for a constant $h_\textup{qu} > 0$.
\item[A6.] The prior means are $m_l \equiv 0$ for all $l \in \{0, \ldots, L\}$.
\end{enumerate}

The purpose of Assumption~A1 is to ensure that the domain is sufficiently regular for the use of Sobolev extension and embedding theorems. This assumption could be generalised to allow more complex domains without affecting the convergence rate \citep[Section~3.1]{wynne2021convergence}.  Assumption~A3 and its relatives are standard in the error analysis of GP and BQ methods~\citep[e.g.,][]{karvonen2020maximum, teckentrup2020convergence, wynne2021convergence} and are important for deriving our theoretical results.
The RKHS of a Mat\'ern kernel $c_\textup{Mat\'ern}$ with smoothness $v$ and any length-scale is norm-equivalent to $W_2^{\alpha}(\Omega)$ for $\alpha = v + d/2$ whenever $\Omega$ satisfies Assumption~A1. Assumption~A5, known as the quasi-uniformity assumption~\citep[Section~14.1]{wendland2004scattered}, ensures that each of the sets $W_{l}$ covers $\Omega$ in a sufficiently uniform manner, because the fill-distance of a set $W$ equals the radius of the largest ball in $\Omega$ which contains no point from $W$. Regular grids are examples of sets that satisfy Assumption~A5. Assumption~A6 is made out of convenience and could be replaced with the assumption that $m_l \in W_2^{\beta_l}(\Omega)$ for each $l$.

\begin{theorem} \label{theo1}
Suppose that assumptions A1--A6 hold and define $\tau_l \coloneqq \min\{ \alpha_l, \beta_l\}$.
Then 
\begin{talign*} 
    \Err(\MLBQ) &= \lvert \Pi[f_L]-\MLBQ[f_L] \rvert \\
    &\leq \| \pi \|_{L^\infty(\Omega)} \sum^L_{l=0} a_l \|f_l-f_{l-1}\|_{\tau_l}n_l^{-\tau_l/d}\nonumber
\end{talign*}
whenever each $n_l$ is sufficiently large.
Each constant $a_l > 0$ depends on $\alpha_l$, $\beta_l$, $c_l$, $h_\textup{qu}$, $d$, and $\Omega$, but not on $f_l$ or the data points.
\end{theorem}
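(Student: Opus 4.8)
The plan is to reduce the multilevel error to a sum of single-level Bayesian quadrature errors and then control each term with scattered-data interpolation theory. First I would use the telescoping identity \eqref{eq:telescopic} together with the definition of $\MLBQ$ as a sum of single-level $\BQ$ estimators: writing $g_l \coloneqq f_l - f_{l-1}$, the triangle inequality gives
\begin{talign*}
\Err(\MLBQ) = \Big\lvert \sum_{l=0}^L \big( \Pi[g_l] - \BQ[g_l] \big) \Big\rvert \leq \sum_{l=0}^L \big\lvert \Pi[g_l] - \BQ[g_l] \big\rvert,
\end{talign*}
so it suffices to bound each summand separately. This is where the additive structure of the estimator and the level-wise independence pay off.

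For a fixed level $l$, I would recall that, under A6, the $\BQ$ posterior mean is the integral of the kernel interpolant: $\BQ[g_l] = \Pi[\tilde m_l]$, where $\tilde m_l$ is the $\mathcal{GP}(0,c_l)$ posterior mean of $g_l$ based on the data at $W_l$. Hence, using A2 to pull out the density,
\begin{talign*}
\big\lvert \Pi[g_l] - \BQ[g_l] \big\rvert = \big\lvert \Pi[g_l - \tilde m_l] \big\rvert \leq \| \pi \|_{L^\infty(\Omega)} \, \| g_l - \tilde m_l \|_{L^1(\Omega)}.
\end{talign*}
A Cauchy--Schwarz step then bounds the $L^1(\Omega)$ norm by $|\Omega|^{1/2} \| g_l - \tilde m_l \|_{L^2(\Omega)}$, reducing everything to an $L^2$ interpolation-error estimate; this is the step that produces the factor $\| \pi \|_{L^\infty(\Omega)}$ in the statement, with $|\Omega|^{1/2}$ absorbed into the constant $a_l$.

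The core of the argument is then the $L^2$ bound on the kernel interpolation error. By A3 the RKHS $\mathcal{H}_l$ is norm-equivalent to $W_2^{\alpha_l}(\Omega)$, while by A4 the integrand $g_l$ lies in $W_2^{\beta_l}(\Omega)$, so I must treat both the well-specified regime ($\beta_l \geq \alpha_l$, where $g_l \in \mathcal{H}_l$) and the misspecified regime ($\beta_l < \alpha_l$, where $g_l$ is rougher than the RKHS) in a unified way through the smoothness $\tau_l = \min\{\alpha_l,\beta_l\}$. The tool I would invoke is a Sobolev sampling inequality of Narcowich--Ward--Wendland type applied to $g_l - \tilde m_l$, which vanishes at the design points; combined with the norm-equivalence A3 and the quasi-uniformity A5, this yields a constant $C_l$ depending on $\alpha_l,\beta_l,c_l,h_\textup{qu},d,\Omega$ but not on $g_l$ or the design, such that
\begin{talign*}
\| g_l - \tilde m_l \|_{L^2(\Omega)} \leq C_l \, h_{l,\Omega}^{\tau_l} \, \| g_l \|_{\tau_l}
\end{talign*}
once $h_{l,\Omega}$ is below a threshold. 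This is exactly the regime covered by the estimates of \citet{wendland2004scattered} in the well-specified case and their misspecified extension in \citet{wynne2021convergence}. Finally, invoking A5 to replace $h_{l,\Omega}$ by $h_\textup{qu} n_l^{-1/d}$, collecting $|\Omega|^{1/2}$, $C_l$ and $h_\textup{qu}^{\tau_l}$ into a single $a_l$, and summing over $l$ recovers the claimed bound; the fill-distance threshold is what forces ``each $n_l$ sufficiently large''.

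I expect the misspecified interpolation estimate to be the main obstacle. When $\beta_l < \alpha_l$ the integrand lies outside the native space of the kernel, so the classical RKHS worst-case-error argument (bounding by $\|g_l\|_{\mathcal{H}_l}$ times the posterior standard deviation) is unavailable, and one must instead control the $L^2$ error directly in terms of the \emph{weaker} $W_2^{\tau_l}$ norm via the sampling-inequality machinery. Checking that the resulting constants are genuinely independent of $f_l$ and of the point locations, as asserted in the statement, also requires care, since this independence rests entirely on the uniformity built into A5.
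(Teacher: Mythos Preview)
Your proposal is correct and follows essentially the same skeleton as the paper: telescoping plus triangle inequality, then at each level pull out $\|\pi\|_{L^\infty(\Omega)}$ and bound the interpolation error via scattered-data approximation theory, finally converting fill-distance to $n_l^{-1/d}$ via A5. The only minor deviation is that the paper bounds $\|g_l-\tilde m_l\|_{L^1(\Omega)}$ directly by a sampling inequality of \citet{Arcangeli2007} (with $q=1$) and then controls $\|g_l-\tilde m_l\|_{\beta_l}$ via the Narcowich--Ward--Wendland estimate in \citet{Narcowich2006}, whereas you take an extra Cauchy--Schwarz step to reduce to $L^2$ and package both steps into a single misspecified interpolation bound cited from \citet{wynne2021convergence}; either route yields the same rate and the same dependence of the constants.
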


Theorem~\ref{theo1} is proved in Appendix \ref{append:ProofTheo1}. The proof is similar to the convergence proofs in~\citet{kanagawa2020convergence,karvonen2020maximum, teckentrup2020convergence, wynne2021convergence}.
The Sobolev norm in the bound may be replaced with the RKHS norm $\| f_l - f_{l-1} \|_{\mathcal{H}_l}$ if $\beta_l \geq \alpha_l$ due to assumption~A3.

If it is assumed that $\beta_l \geq \alpha_l$ for each $l$, one may use Theorem~1 and Corollary~2 in \citet{Krieg2022} to prove a variant of Theorem~\ref{theo1} in which the points are independent samples from a  uniform distribution on $\Omega$ and the upper bound is for the expected error of the MLBQ.

Various other generalisations of Theorem~\ref{theo1} are possible but are not included here so as to simplify the presentation of our assumptions. These include non-zero prior means, varying kernel parameters~\citep{teckentrup2020convergence}, misspecified likelihoods~\citep{wynne2021convergence}, and improved rates when each $f_l$ has, essentially, twice the smoothness of $c_l$ (\citealp{tuo2020improved};~\citealp[Sections~3.4 and~4.5]{karvonen2020maximum}) or when both $f_l$ and $c_l$ are infinitely differentiable \citep[Theorem~2.20]{karvonen2019kernel}.

At each level $l$, the convergence rate of $\mathcal{O}(n_l^{-\tau_l/d})$ is faster than the  rate for MC estimators of $\mathcal{O}(n_l^{-1/2})$ because $\tau_l/d = \min\{\alpha_l, \beta_l\} / d > 1/2$.
Since $f_0,...,f_L$ approximate the same function, the kernels $c_l$ and the smoothnesses $\alpha_l$ and $\beta_l$ do not typically change with $l$, which means that the the constants $a_l$ do not change.
If, additionally, $\| f_l - f_{l-1} \|_{\tau_l}$ tends to zero as $l$ increases, which is usually the case because approximation quality should increase with the level, we see that fewer evaluations are needed at higher levels.
However, if $\tau_l$ differ significantly, more evaluations than expected may be needed at higher levels.

Using Theorem \ref{theo1} and assuming we use the same prior at each level, we can also derive the optimal number of samples for MLBQ under a limited computational budget.
To do so, we assume that the cost of fitting GPs at each level is dwarfed by the cost of function evaluations. This is reasonable because function evaluation costs tend to be relatively large for applications where MLMC is commonly used. For example, for differential equation models the cost is usually driven by the cost of the solvers such as finite difference, finite element or finite volume methods, and this can be large for fine meshes. For example, for the tsunami example in \Cref{sec:Tsunami}, fitting all the GPs takes less than $25$ seconds whereas a single evaluation of $f_L-f_{L-1}$ takes $150$ seconds. For this reason, we therefore assume that the total cost of running MLBQ and functions evaluations is given by $\gamma \sum^L_{l=0} C_{l} n_{l}$ for some $\gamma \geq 1$ but close to $1$.

\begin{theorem} \label{theo2}
  Suppose that assumptions A1--A6 hold and $c_l$ and $\tau \coloneqq \tau_l = \min\{\alpha_l, \beta_l\} = \alpha_l$ do not depend on $l$. Then 
\begin{talign*}
n^{\textup{MLBQ}} &= \left(n_0^{\textup{MLBQ}},\ldots,n_L^{\textup{MLBQ}}\right) \\
&\coloneqq  \underset{\substack{n_0,n_1,\cdots,n_L \\ \text{s.t. } \gamma \sum^L_{l=0} C_{l} n_{l} = T }}\argmin   \sum^L_{l=0} a_l \|f_l-f_{l-1}\|_{\tau} n_l^{-\tau/d}   
\end{talign*}
  for $\gamma \geq 1$ and $T > 0$ is solved by
  \begin{talign*} 
    n^{\textup{MLBQ}}_l &= D \left( \frac{ \|f_l-f_{l-1}\|_{\tau}}{C_l} \right)^{\frac{d}{\tau + d}} \quad \forall l \in \{0, \ldots, L\},
  \end{talign*}
  where
  $D= T  \left(\gamma \sum^L_{{l'}=0}C_{l'}^{\frac{\tau}{\tau + d}} \left( \|f_{l'}-f_{{l'}-1}\|_{\tau} \right)^{\frac{d}{\tau + d}}\right)^{-1}$.
\end{theorem}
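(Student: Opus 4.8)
The plan is to treat this as a standard continuously-relaxed constrained optimisation and solve it by the method of Lagrange multipliers, using convexity to certify global optimality.

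First I would simplify the objective. Since $c_l$ and $\tau$ are assumed independent of $l$, the constants $a_l$ (which by Theorem~\ref{theo1} depend only on $\alpha_l,\beta_l,c_l,h_{\textup{qu}},d,\Omega$) all coincide, $a_l \equiv a > 0$. A common positive multiplicative constant does not change the location of an $\argmin$, so the problem reduces to minimising $F(n) \coloneqq \sum_{l=0}^L w_l\, n_l^{-\tau/d}$ with weights $w_l \coloneqq \|f_l - f_{l-1}\|_{\tau} > 0$, over $n = (n_0,\ldots,n_L)$ in the positive orthant subject to the affine budget constraint $\gamma\sum_{l=0}^L C_l n_l = T$. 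As is standard in the MLMC analysis, I would relax each $n_l$ to a positive real and ignore integrality.

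Next I would note that this is a convex program: for $\tau/d > 0$ each map $n_l \mapsto n_l^{-\tau/d}$ has strictly positive second derivative on $(0,\infty)$, so $F$ is strictly convex there, the weights $w_l$ are positive, and the feasible set is the intersection of a hyperplane with the open positive orthant, hence convex. Consequently any feasible stationary point of the Lagrangian is the unique global minimiser, so it suffices to produce one. I would then form the Lagrangian $\mathcal{L}(n,\lambda) = \sum_l w_l n_l^{-\tau/d} + \lambda(\gamma \sum_l C_l n_l - T)$ and set $\partial \mathcal{L}/\partial n_l = 0$, giving $-\tfrac{\tau}{d} w_l n_l^{-\tau/d - 1} + \lambda \gamma C_l = 0$. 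Solving for $n_l$ yields $n_l^{(\tau+d)/d} = \tfrac{\tau w_l}{d\lambda\gamma C_l}$, and raising to the power $d/(\tau+d)$ gives $n_l \propto (w_l/C_l)^{d/(\tau+d)}$; collecting the $l$-independent factors into a single constant $D$ recovers the claimed form $n_l^{\textup{MLBQ}} = D\,(\|f_l-f_{l-1}\|_{\tau}/C_l)^{d/(\tau+d)}$.

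Finally, $D$ is pinned down by substituting this back into the budget constraint: $\gamma \sum_l C_l \cdot D (w_l/C_l)^{d/(\tau+d)} = T$. Using $1 - d/(\tau+d) = \tau/(\tau+d)$ to combine the powers of $C_l$ gives $\gamma D \sum_l C_l^{\tau/(\tau+d)} w_l^{d/(\tau+d)} = T$, which rearranges to the stated expression for $D$; note that the common factor $a$ cancels here because the budget constraint fixes only the overall scale, which is $a$-independent. I do not expect a genuine analytical obstacle, since the content is routine Lagrange-multiplier bookkeeping, so the two points that most need care are (i) justifying via the convexity argument that the stationary point is a \emph{global} minimiser rather than merely asserting stationarity, and (ii) the exponent arithmetic in passing from $n_l^{(\tau+d)/d} = \cdots$ to the final powers of $C_l$ and $\|f_l-f_{l-1}\|_{\tau}$.
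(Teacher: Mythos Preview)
Your proposal is correct and follows essentially the same route as the paper's proof: set $A_l = a\|f_l-f_{l-1}\|_\tau$, form the Lagrangian, differentiate to obtain $n_l \propto (A_l/C_l)^{d/(\tau+d)}$, and fix the proportionality constant via the budget constraint. Your explicit convexity argument to certify that the stationary point is a global minimiser is a welcome addition of rigour that the paper omits, but otherwise the two arguments are the same Lagrange-multiplier computation.
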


The proof is given in Appendix \ref{append:ProofTheo2}. The additional assumptions were introduced to simplify the result by ensuring that $a_l$ does not depend on $l$. If the function evaluation costs do not dominate or if $\tau_l$ differ, one can still calculate the optimal sample sizes by solving the optimisation problem in \Cref{theo2} numerically. The theorem provides a solution for the relaxed optimisation problem where $n_1,\ldots,n_L$ are real numbers. In practice, it will needed to use natural numbers, and this is possible by taking the floor or ceiling of each $n_1,\ldots,n_L$.

The optimal sample sizes for MLMC and MLBQ are similar; here, $\|f_l -f_{l-1}\|_\tau$ is analogous to $V_l$ in that it measures the size of each element in the telescoping sum. We expect $\|f_l-f_{l-1}\|_{\tau}$ to be a decreasing function of $l$ which converges to zero.  If the convergence is slow, the sample size for large $l$ has to be relatively large, whereas it can be relatively small otherwise. Additionally, a large cost $C_l$ also leads to relatively smaller sample sizes. For MLMC, the optimal sample size at level $l$ is proportional to $C_l^{-1/2}$ whereas for MLBQ it is proportional to $C_l^{-d/(\tau+d)}$. Therefore, when $\tau > d$, the penalisation for large $C_l$ is smaller for MLBQ than MLMC, and vice-versa. This is intuitive because when $\tau$ is large, the integrands are smoother and we expect BQ to be able to approximate them fast in the number of samples.

Plugging in the optimal samples sizes of \Cref{theo2} to the bound in \Cref{theo1}, we obtain that 
\begin{talign*}
\Err(\MLBQ) &\leq  A T^{-\frac{\tau}{d}} \left( \sum^L_{{l}=0} C_{l}^{\frac{\tau}{\tau + d}}  \|f_{l}-f_{{l}-1}\|_{\tau}^{\frac{d}{\tau + d}} \right)^{\frac{\tau+d}{d}},
\end{talign*}
where $A = \| \pi \|_{L^\infty(\Omega)} a \gamma^{\tau/d}$. 
For BQ based on evaluations of $f_L$ and utilising the same computational budget we obtain 
\begin{talign*}
\Err(\BQ) & \leq A T^{-\frac{\tau}{d}} C_L^{\frac{\tau}{d}}\|f_L\|_{\tau}
\end{talign*}
from \Cref{theo1} by setting $f_l \equiv 0$ and $C_l = 0$ for every $l \in \{0, \ldots, L - 1\}$.
Let us denote the two upper bounds above by $B_\textup{MLBQ}$ and $B_\textup{BQ}$.
To compare these bounds, we consider two cases. Firstly, if the term $b_l \coloneqq {C_l}^{\tau/(\tau+d)}\|f_l-f_{l-1}\|_{\tau}^{d/(\tau+d)}$ grows rapidly with $l$, then $B_\textup{MLBQ}$ is dominated by the highest level $L$, so that $B_\textup{MLBQ} \approx A T^{-\tau/d} C_L^{\tau/d} \|f_L-f_{L-1}\|_{\tau}$. Secondly, if $b_l$ decreases rapidly with $l$, then $B_\textup{MLBQ} \approx A T^{-\tau/d} C_0^{\tau/d} \|f_0\|_{\tau} $. In either case, the bound on $\Err(\MLBQ)$ is smaller than that on $\Err(\BQ)$ under natural assumptions. 
In the first case
\begin{talign*}
B_\textup{BQ} \approx \frac{\|f_L\|_{\tau}} { \|f_L-f_{L-1}\|_{\tau}} B_\textup{MLBQ} \geq B_\textup{MLBQ}
\end{talign*}
if $\|f_L\|_{\tau} \geq \|f_L-f_{L-1}\|_{\tau}$, whilst in the second case
\begin{talign*}
B_\textup{BQ} \approx \left(\frac{C_L}{C_0}\right)^{\tau/d} \frac{ \| f_L \|_\tau }{ \| f_0 \|_\tau } B_\textup{MLBQ} \geq B_\textup{MLBQ}
\end{talign*}
if $C_L \geq C_0$ and $\| f_L \|_\tau \geq \| f_0 \|_\tau$.

\section{PRACTICAL CONSIDERATIONS}\label{append: hyper_prior}
Before moving on to experimental results, we briefly discuss practical considerations for the implementation of MLBQ.

Firstly, for each level $l$, we will usually include at least one amplitude $\sigma_l$ parameter (so that the covariance function takes the form $\tilde{c}_l(\omega,\omega') = \sigma_l^2 c_l(\omega,\omega')$ for some covariance function $c_l$) and a lengthscale $\gamma_l$ (or a lengthscale per dimension of the data). We suggest to select these by maximising the marginal log-likelihood separately for each level:
\begin{talign*} 
   L(\gamma_l,\sigma_l) & = -\frac{1}{2} {\sigma_l}^{-2} \left(f_l(W_l)-f_{l-1}(W_l)\right)^\top c_l(W_l,W_l)^{-1}\\
   &\qquad \times \left(f_l(W_l)-f_{l-1}(W_l)\right)  -n_l \log {\sigma_l} ^2\\
   & \quad \qquad+\frac{1}{2}\log|c_l(W_l,W_l)|-\frac{n_l}{2}\log 2 \pi ,
\end{talign*}
For a given $\sigma_l$, this can be done in closed-form as follows:
\begin{talign*} 
     \sigma^*_l=\sqrt{\frac{(f_l(W_l)-f_{l-1}(W_l))^\top c_l(W_l,W_l)^{-1}(f_l(W_l)-f_{l-1}(W_l))}{n_l}}.
\end{talign*}
For the lengthscale, the maximum of $L(\gamma_l,\sigma_l)$ as a function of $\gamma_l$ needs to be obtained numerically. When $n_l$ is large, we can use mini-batches with stochastic optimization. 

Note that it is essential to select the hyperparameters for each level independently. To illustrate this, consider each level having prior $\mathcal{GP}(0, \sigma^2 c_l)$. All other parameters besides the amplitude are fixed, and maximum likelihood is used to estimate the amplitude.
The resulting maximum marginal likelihood estimate (MLE) is
\begin{talign*}
  \sigma_\textup{all} = \sqrt{ \frac{(y^\top c(W, W)^{-1} y}{\sum_{l=0}^L n_l} },
\end{talign*}
where the vectors $W$ and $y$ are formed by stacking all $W_l$ and $f_l(W_l) - f_{l-1}(W_l)$, respectively and the matrix $c(W,W)$ is formed with diagonal blocks all $c_l(W_l, W_l)$ and off-diagonal components $0$s.
Inserting this MLE in the equation for the MLBQ variance yields
\begin{talign} \label{variance-sigma-all}
  \V_{\textup{MLBQ}}[&\Pi[f]] = \frac{y^\top c(W, W)^{-1} y}{\sum_{l=0}^L n_l} \sum_{l=0}^L \big(  \Pi[\Pi[c_l(\cdot,\cdot)]]  \nonumber\\
  &-\Pi[c_l(\cdot,W_l)]c_l(W_l,W_l)^{-1}\Pi[c_l(W_l,\cdot)] \big).
\end{talign}
Because each term in the sum above depends only on $W_l$, the knowledge that $f_l - f_{l-1}$ tends to zero as $l$ increases is not exploited. The essential property of a multilevel method that less data is needed on higher levels is not reflected in the MLBQ variance.
This defect is eliminated if each level has an independently estimated amplitude parameter.
In this case the variance becomes
\begin{talign*}
  \V_{\textup{MLBQ}}[\Pi[f]] &= \sum_{l=0}^L \sigma_l^*{}^2 \big( \Pi[\Pi[c_l(\cdot,\cdot)]]\\
  & \quad-\Pi[c_l(\cdot,W_l)]c_l(W_l,W_l)^{-1}\Pi[c_l(W_l,\cdot)] \big).
\end{talign*}
Now the magnitude of $f_l - f_{l-1}$ directly affects the $l$th term: if $f_l - f_{l-1}$ is small, the contribution of the $l$th term to $\V_{\textup{MLBQ}}[\Pi[f]]$ is small even if $W_l$ contains only few points, unlike in \eqref{variance-sigma-all}.

\section{EXPERIMENTS}
\label{sec:result}
We now evaluate MLBQ for synthetic differential equation models and landslide-generated tsunami modelling. The code to reproduce our experiments is available at \url{https://github.com/CeciliaKaiyu/MLBQ}. The MLBQ method is also implemented in the \texttt{ProbNum} open-source Python package \citep{Wenger2021}.

\paragraph{Poisson Equation} \label{sec:PE example}

The Poisson equation is a canonical partial differential equation which arises in physics~\citep[e.g.,][Chapter~8]{MathewsWalker1970}. We consider a synthetic model where for $f \colon (0, 1) \to \R$, 
\begin{talign*}
  f''(\omega) &= z(\omega) \: \text{ for } \:  \omega \in (0,1) \quad \& \quad
    f(0)=f(1)=0
\end{talign*}
where $z(\omega)=1$. Here, $\Pi[f]=\int_0^1 f(\omega) d\omega$ so that $\Pi$ is a $\text{Unif}(0,1)$. To obtain $f_0,\ldots,f_L$, we use piecewise linear finite element approximations as described in Appendix \ref{Append: FEM}. We use $L=2$ and have $C=(C_0,C_1,C_2) = (3.6,8.5, 42.4)$ (all measured in $10^{-3}$ seconds). This problem is relatively simple and could be brute-forced with MC, but has the advantage that we can compute the optimal sample sizes for MLBQ and MLMC since A1--A6 are all satisfied when using a unifom grid of points and $\|f_l-f_{l-1}\|_{\tau_l}$ can be computed in closed form for all $l$. It therefore makes for a good test-bed for our method.

\begin{figure}[h]
    \centering
 \includegraphics[width=0.49\textwidth]{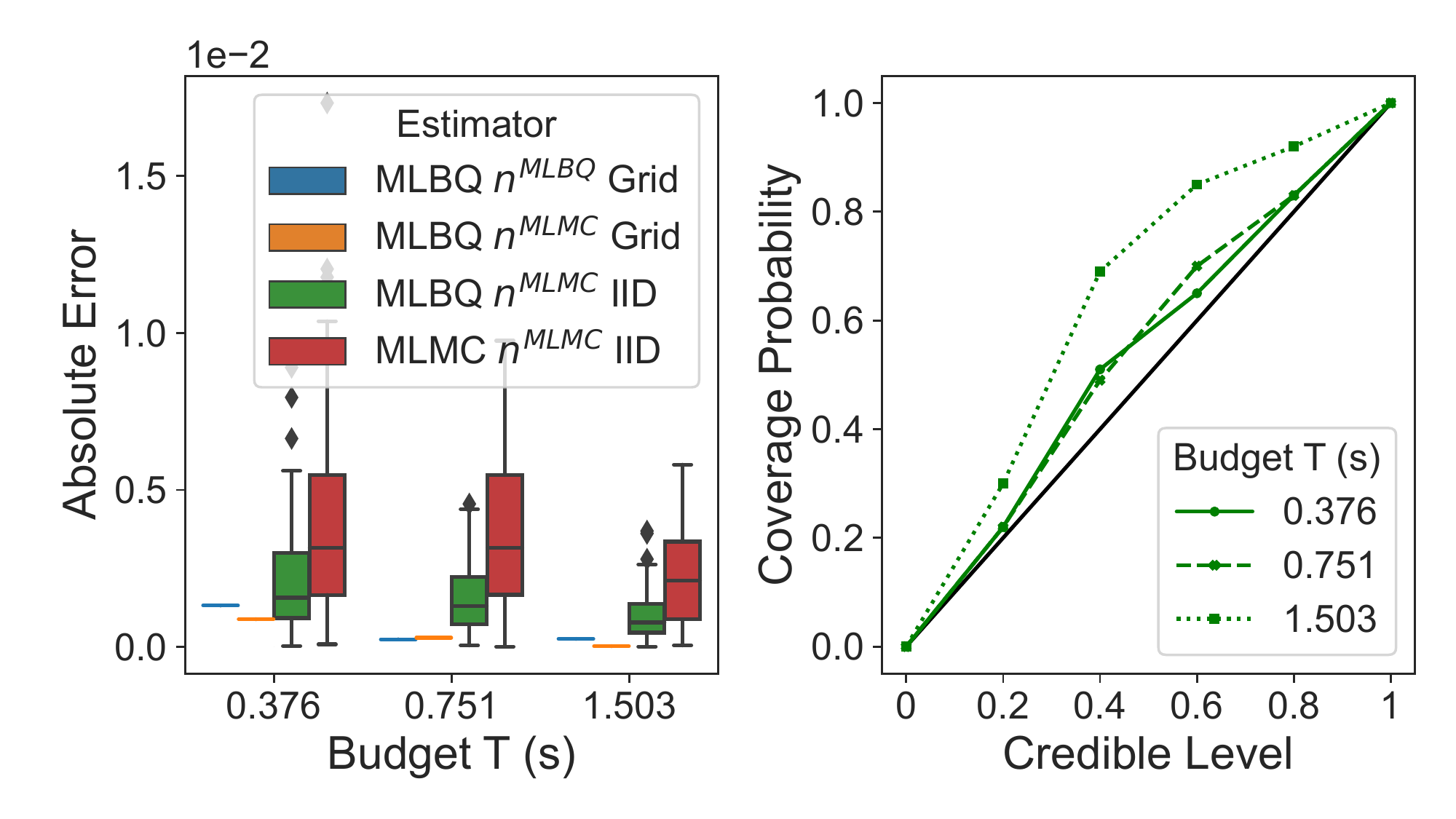}
    \caption{Poisson Equation. \textit{Left}: Absolute integration error. \textit{Right}: Calibration of MLBQ, IID points.}
    \label{fig:PE3L}
\end{figure}

We compare four different settings: MLBQ using $n^{\text{MLBQ}}$ and uniform grid points, MLBQ using $n^{\text{MLMC}}$ and uniform grid points  or IID points, and MLMC using $n^{\text{MLMC}}$ and IID points. To implement $n^{\text{MLMC}}$, we brute-forced the computation of $V_0,\ldots,V_L$ through an MC approximation. All MLBQ algorithms use a mean-zero GP with Mat\'ern $0.5$ kernel, and all sample sizes are given in Appendix \ref{append: Experimental_Details_PE}. 

Figure \ref{fig:PE3L} visualizes the result of $100$ repetitions of the experiment, where for each repetition, we evaluated $f_0,\ldots,f_L$ at new point sets, and used the same dataset for MLBQ and MLMC to estimate $\Pi[f]$. When using uniform grids, there is no randomness and the experiment is therefore done only once.  The left-hand side plot shows that $\MLBQ[f]$ significantly outperforms $\MLMC[f]$ across a range of budgets $T$. For MLBQ, we also see that the impact of the sample size per level is not as significant as that of type of points used, with the uniform grid outperforming IID points. This is promising since the optimal sample sizes will be difficult to obtain in general due to the need to access $V_l$ or $\|f_l-f_{l-1}\|_{\tau_l}$ for each level $l$ (in the cases of $n^{\text{MLMC}}$ and $n^{\text{MLBQ}}$ respectively). The right-hand side plot shows coverage frequencies for various credible level. Most of the results lie closely to the identity line, indicating that MLBQ has good frequentist coverage. The only exception is for larger budget $T$, in which case MLBQ is under-confident in the sense that the posterior variance is too large relative to frequentist coverage probabilities. This is generally preferable to being over-confident.

\paragraph{ODE with Random Coefficient and Forcing} \label{ODE}

We now consider a popular test-bed for MLMC as first studied in Section 7.1 of \cite{giles2015multilevel}:
\begin{talign*}
    \frac{\mathrm{d}}{\mathrm{d} x}\left( c(x)\frac{\mathrm{d}u}{\mathrm{d} x} \right) &= -50^2 \omega_2^2 \ \text{ for } \ x\in (0,1)
\end{talign*}
with $u(0)=u(1)=0$, $c(x) = 1+\omega_1 x$, $\omega_1 \sim \text{Unif}(0,1)$ and $\omega_2 \sim \mathcal{N}(0,1)$.
The integral is 
\begin{talign*}
    \Pi [f] =\int_\Omega f(\omega) \Pi(d \omega) =\int_\Omega (\int^1_0u(x,\omega)dx) \Pi(d \omega)
\end{talign*}
where $\omega=(\omega_1,\omega_2)$ and $\Pi$ is a product of the marginal distributions for $\omega_1$ and $\omega_2$, and $\Omega=[0,1]\times(-\infty,\infty)$. We take $L=2$ and each level is obtained through a finite difference approximation of $f$ with grid size $h_l$ (see Appendix \ref{appendixODE_solver}). We have $C=(C_0,C_1,C_2) = (1.0, 2.6, 21.8)$ (in $10^{-3}$ seconds).

\begin{figure}[h]
    \centering
    \includegraphics[width=0.48\textwidth]{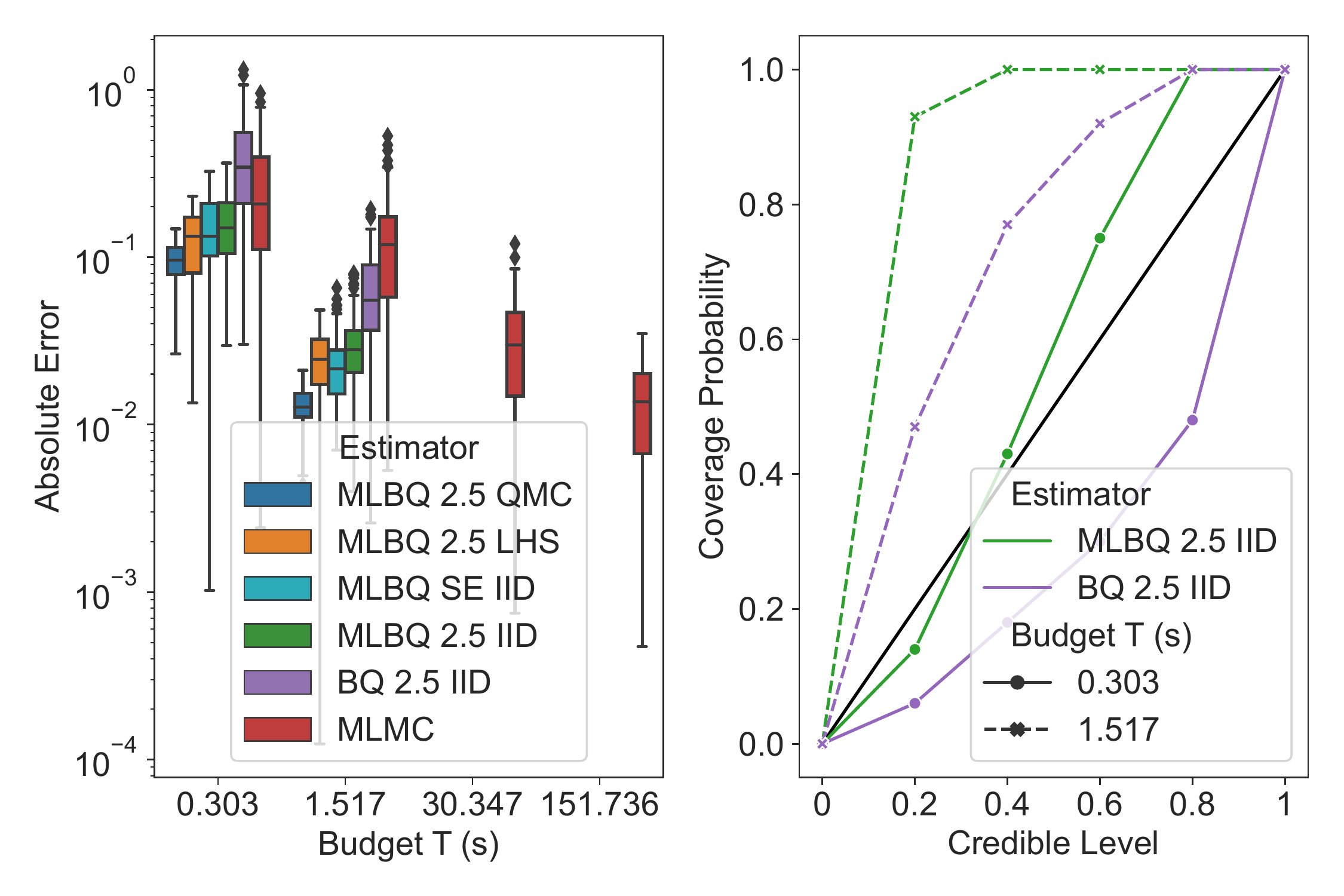}
    \caption{ODE with Random Coefficient and Forcing. \textit{Left}: Absolute integration error. \textit{Right}: Calibration of MLBQ and BQ with IID points.}
    \label{fig:ODE3L}
\end{figure}

The assumptions from \Cref{sec:theory} do not hold here since $\Omega_2$ is unbounded (which breaks A1), but we still use this example to study our method beyond the setting of our theoretical results.
We compare MLBQ with different point sets, MLMC and BQ with IID samples. For all multilevel methods, we select the sample size according to $n^{\text{MLMC}}$ (see Appendix \ref{appendixODE_solver}). In this example, we cannot use $n^{\text{MLBQ}}$ since $\|f_{l}-f_{l-1}\|_{\tau_l}$ is not available in closed form. All methods using a GP with covariance taken to be a product of univariate Mat\'ern  kernels per dimension with $v=2.5$, or a squared exponential kernel (``SE''). 

There are three interesting observations in the left-hand side plot in Figure \ref{fig:ODE3L}. Firstly, MLBQ with a Halton sequence (``QMC'') or a Latin hypercube design (``LHS'') performs better than with IID sampling, once again reflecting the importance of the choice of point set. Secondly, the choice of kernel also has some impact, with the MLBQ estimator with squared exponential kernel outperforming the corresponding estimator with Mat\'ern kernel. Thirdly, MLBQ significantly outperforms BQ and MLMC, even though a sub-optimal sample size per level was used here. More precisely, MLBQ (with any point set) at $T=1.517\text{s}$ is able to outperform MLMC with a budget $20$ times larger ($T=30.347\text{s}$) and is comparable to MLMC with a budget $100$ times larger ( $T=151.736\text{s}$).  A similar conclusion holds when comparing MLBQ with BQ.

Finally, the right-hand side plot shows that the calibration performances of MLBQ and BQ are very similar. The methods tend to be over-confident when $T$ is very small, and become under-confident when $T$ is larger.

\paragraph{Landslide-Generated Tsunami}\label{sec:Tsunami}

We now consider a variation of the submerged landslide-generated tsunami of \cite{lynett2005numerical}. The movement of the landslide mass on the beach slope results in the generation of tsunami waves (see Figure~\ref{fig:sketch}, left), and we consider the temporal evolution of this wave. We use a tsunami simulator called Volna-OP2 \citep{reguly2018volna, giles2020performance}, which is a differential equation solver capable of simulating the complete life-cycle of a tsunami: generation, propagation and inundation. Volna-OP2 is an advanced tsunami simulation tool using unstructured meshes accelerated on GPUs that has been utilised widely by geoscientists, e.g. for real-time tsunami warning systems \citep{giles2021faster} or hazard assessments \citep{gopinathan2021probabilistic, salmanidou2021probabilistic}. Volna-OP2 numerically solves the nonlinear shallow water equations (see Appendix \ref{appendixlandslide}) with a finite volume method. The simulations with Volna-OP2 are run on a single NVIDIA P100 graphical processing unit (the Wilkes2 machine in Cambridge’s CSD3).

\begin{figure}[h]
    \centering
    \includegraphics[trim={0.5cm 0 0 0},clip,scale=0.4]{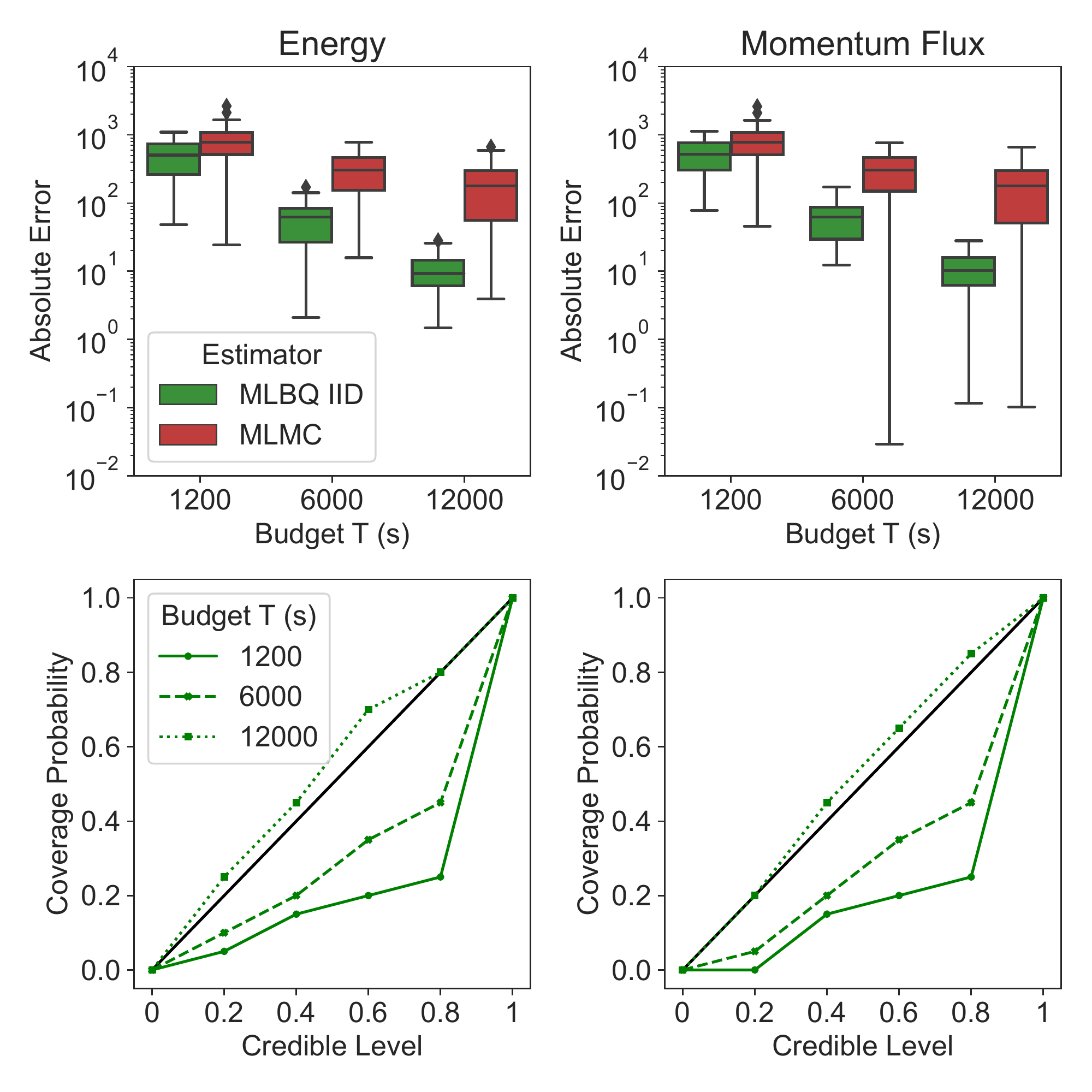}
    \caption{Landslide-Generated Tsunami. \textit{Top}: Absolute integration error.  \textit{Bottom}: Calibration of MLBQ. The left-hand side plots correspond to $\Pi [f^e]$ and the right-hand side plots to $\Pi[f^m]$. }
    \label{TsunamiPlot}
\end{figure}

We use a bathymetry $h(x,t)$ with $x \in [-100, 3100]$ (in meters) and $t \in [0, 300]$ (in seconds). The parameters of interest are:  $\omega_1$, defined to be the ratio of the maximum vertical thickness of the slide ($\Delta h$) to the initial vertical distance from the center point of the slide to the surface ($d_o=50$ m); $\omega_2$, the slope angle; and $\omega_3$, the length of the slide. All of these parameters lead to nonlinear effects which can greatly influence the amplification of tsunami waves. The value of these parameters tends to be unknown a-priori and we take $\Pi$ to consist of marginal distributions representing our uncertainty, given by  $\text{Unif}(0.125, 0.5)$, $\text{Unif}(5^{\circ}, 15^{\circ})$ and $\text{Unif}(100m, 200 m)$ respectively. A representative example of the solution provided by Volna-OP2 for $\omega=(0.375,10^{\circ},150m)$ is on the right-hand side in Figure \ref{fig:sketch}. In tsunami modelling, two functionals of the solution of the model which are often of interest are the total energy flux \citep{degueldre2016random}, denoted $f^e \colon \Omega \rightarrow \R$, and the momentum flux \citep{park2017probabilistic}, denoted $f^m \colon \Omega \rightarrow \R$, and we therefore want to compute $\Pi[f^e]$ and $\Pi[f^m]$.

In the experiments, we estimate these quantities at a gauge at $x=3000$ with MLBQ and MLMC using the same IID point sets and repeat the experiment $20$ times. We take $L=4$ and each level corresponds to a different spatial and temporal resolution used in the solver. The number of evaluations per level are listed in Appendix \ref{appendixlandslide}. We have $C= (C_0,C_1,C_2,C_3,C_4) = (5,15,30,65,150)$ (measured in seconds). These costs are significantly larger than the cost of fitting all GPs, which is carried out on a laptop and ranges from $1$ second to $25$ seconds depending on sample sizes per level. We use a tensor product Mat\'ern kernel with smoothness $v=2.5$ for MLBQ. The related analytical formulas are provided in Appendix \ref{appendix_kmie}.

 The upper box plots of Figure~\ref{TsunamiPlot} show the absolute error of our estimates.  As we observed, MLBQ always significantly outperforms MLMC. More precisely, given a fixed computational budget, MLMC tends to have an error between $10$ and $100$ times smaller than MLBQ. We did not compare to BQ here because $f_L$ is too computational expensive to obtain a reliable estimate. The calibration plots show that MLBQ tends to be overconfident when the budget  is small ($T=1200$ or $T=6000$) but becomes under-confident when budget is larger ($T=12000$). 
 
 Overall, although the computational complexity setup studied in this paper could be considered a `toy model' for the tsunami warning community, any method which showcases such a drastic reduction in computing time could have a signficant impact on tsunami warning centres given their tight budget constraints.

\section{CONCLUSION}

We introduced MLBQ, a method for computing integrals involving multifidelity models. MLBQ enhances MLMC by bringing to it the advantages of Bayesian methods, namely: (1) the ability to make use of prior information about the integrand, which leads to faster convergence rates, and (2) the ability to provide Bayesian quantification of uncertainty over the value of the integral of interest. From the point of view of Bayesian probabilistic numerics, this algorithm is also a step forward towards making the field reach applications where it can be most impactful, including specifically when models are computationally expensive and it is therefore desirable to make use of as much prior knowledge as possible to improve estimates.

There are a large number of possible extensions and we therefore only mention some of the most promising. Firstly, one could consider extending MLBQ to multi-index Monte Carlo \citep{Haji-Ali2016}, which can be useful for models where levels can have multiple indices. For example, in partial differential equation models, one index could be discretisation through time and the other through space, and using this structure could bring further gains. Secondly, one could consider improving scalability through hybrid strategies where BQ is used on the more expensive levels and alternatives, such as MC or scalable BQ methods \citep[e.g.][]{Karvonen2017symmetric,Jagadeeswaran2018}, are used on the cheaper levels. Finally, since we observed that the choice of point set had a large impact on performance, one could consider designing novel acquisition functions for adaptive experimental design \citep[e.g. following the work of][]{Ehara2021}.

\subsubsection*{Acknowledgements}

The authors would like to thank Dimitra Salmadinou for support in accessing tsunami simulations, and Zhuo Sun for helpful discussions. KL and SG acknowledge funding from the Lloyd's Tercentenary Research Foundation, the Lighthill Risk Network and the Lloyd’s Register Foundation-Data Centric Engineering Programme of the Alan Turing Institute for the project ``Future Indonesian Tsunamis: Towards End-to-end Risk quantification (FITTER)''. SG also acknowledges support from The Alan Turing Institute project ``Uncertainty Quantification of multi-scale and multiphysics computer models: applications to hazard and climate models'' under the EPSRC grant EP/N510129/1. DG and SG were supported by the EPSRC project EP/W007711/1 ``Software Environment for Actionable \& VVUQ-evaluated Exascale Application'' (SEAVEA). 
TK was supported by the Academy of Finland postdoctoral researcher grant \#338567 ``Scalable, adaptive and reliable probabilistic integration''. 
FXB was supported by the Lloyd’s Register Foundation Programme on Data-Centric Engineering and The Alan Turing Institute under the EPSRC grant [EP/N510129/1], and through an Amazon Research Award on
“Transfer Learning for Numerical Integration in Expensive Machine Learning Systems”. 

\bibliography{reference}

\appendix
\onecolumn

{
\begin{center}
\LARGE
    \textbf{Supplementary Material}
\end{center}
}

In \Cref{appendix_broader_impact}, we discuss the broader impact of our work. In \Cref{appendix_proof}, we provide the proofs of all theoretical results in the main text. In \Cref{appendix_setup}, we provide details on the experiments introduced in the main text.

\section{BROADER IMPACT}\label{appendix_broader_impact}
Our paper focuses on numerical integration, a common computational problem in statistics and machine learning. Our proposed approach improves the accuracy of approximations and provides probabilistic uncertainty quantification for the value of the integral given limited function evaluations. Our experiments show that applying our approach has the potential to reduce financial and time costs in applications in environmental science and engineering, where high-performance computing clusters are widely used. 

However, it is the specific application that is the determining factor in the broader impact. It depends on whether the user uses the approach for socially beneficial research to have a positive impact. For example, in the landslide tsunami example studied in the experiments section, we showed that the computational requirements can be reduced by using our approach to get useful approximations for tsunami researchers, e.g. designers of tsunami resistant buildings or prevention structures.

\section{PROOFS}\label{appendix_proof}
In this section, we provide the proofs of all theoretical results in the main text. This includes the proof of Proposition \ref{BMLMC_sum} in  \Cref{append:ProofProp1}, the extension in \Cref{append:correlated_case}, the proof of \Cref{theo1} in  \Cref{append:ProofTheo1} and the proof of \Cref{theo2} in \Cref{append:ProofTheo2}. Additionally, for completeness we recall a well-known derivation of  the optimal sample size for MLMC in \Cref{appendix_mlmc_optimN}.

\subsection{Optimal Sample Size for MLMC Given Cost Constraint} \label{appendix_mlmc_optimN}

The optimal sample size $n_0^{\text{MLMC}},\ldots,n_L^{\text{MLMC}}$  that minimize the MSE of MLMC estimates with an overall cost constraint $T$ is the solution to the problem
\begin{talign*}
   n_0^{\text{MLMC}},\ldots,n_L^{\text{MLMC}} \coloneqq  \underset{n_0,n_1,\cdots,n_L}\argmin \sum^L_{l=0} V_l n_l^{-1} \quad \text{s.t.} \quad \sum^L_{l'=0} C_{l'} n_{l'} = T .
\end{talign*}
In this section, we show how the equation above can be solved by using Lagrange multipliers. For some $\lambda>0$, define
\begin{talign*} 
    F_{\text{MLMC}}(n_0,\ldots,n_L,\lambda)=\sum^L_{l=0} V_l n_l^{-1}
    -\lambda \big(T - \sum^L_{l'=0}C_{l'} n_{l'} \big).
\end{talign*}
By taking the derivative of $F_{\text{MLMC}}(n_0,\ldots,n_L,\lambda)$ with respect to $n_0,\ldots,n_L,\lambda$ and setting these equal to $0$, we have
\begin{talign*}
   - V_l n_l^{-2}+\lambda C_l&=0  \quad \Leftrightarrow \quad  n_l= \left(\frac{\lambda C_l}{V_l}\right)^{-\frac{1}{2}} \quad \text{for} \quad l \in \{0,\ldots,L\} \quad \text{and } \quad \sum^L_{l'=0}C_{l'} n_{l'} =T.
\end{talign*}
By plugging the first equation into the second, we have 
\begin{talign*}
    \sum^L_{l'=0}C_{l'} \left(\frac{\lambda C_{l'}}{V_{l'}}\right)^{-\frac{1}{2}}=T 
    \quad\Leftrightarrow \quad 
   \lambda^{-\frac{1}{2}} =\frac{T}{\sum^L_{l'=0}C_{l'} \left(\frac{C_{l'}}{V_{l'}}\right)^{-\frac{1}{2}}}
   \quad\Leftrightarrow \quad
   \lambda= \left(\frac{1}{T} \sum^L_{l'=0}C_{l'} \left(\frac{C_{l'}}{V_{l'}}\right)^{-\frac{1}{2}}\right)^2.
\end{talign*}
Finally, plugging this expression for $\lambda$ into our expression for $n_l$, we get
\begin{talign*}
    n_l^{\text{MLMC}} = T \sqrt{\frac{V_l}{C_l}} \left(\sum^L_{l^\prime=0} \sqrt{V_{l'} C_{l'}}\right)^{-1} \qquad \text{for} \qquad l \in \{0,\ldots,L\}.
\end{talign*}

\subsection{Proof of Proposition~\ref{BMLMC_sum}}\label{append:ProofProp1}

\begin{proof}[Proof of Proposition~\ref{BMLMC_sum}]
If $f_l-f_{l-1}$ for $l \in \{ 0,\ldots,L\}$ are a-priori independent, and  $f_l-f_{l-1} \sim \mathcal{GP}(m_l,c_l)$, then
 \begin{talign*}
   \begin{bmatrix}
        f_0(W_0)\\
         f_1(W_1)-f_0(W_1)\\
         \vdots\\
         f_L(W_L)-f_{L-1}(W_L)\\
         \sum^L_{l=0}\left(f_l(W_*) -f_{l-1}(W_*)\right)
    \end{bmatrix}
    \sim
   &   \mathcal{N}
    \left( 
      \begin{bmatrix}
         m_0(W_0)\\
         m_1(W_1)\\
          \vdots\\
         m_L(W_L)\\
         \sum_{l=0}^L m_l(W_*)
     \end{bmatrix},  \right.\\
  & \left. \qquad \qquad  \begin{bmatrix}
         c_0(W_0,W_0) & 0 & \cdots & 0 & c_0(W_0,W_*)\\
         0 & c_1(W_1,W_1) & \cdots & 0 & c_1(W_1,W_*)\\
         \vdots & \vdots &  \ddots & \vdots  &  \vdots \\
         0 & 0 & \cdots & c_L(W_L,W_L) & c_L(W_L,W_*)\\
         c_0(W_*,W_0) & c_1(W_*,W_1) &  \cdots & c_L(W_*,W_L) & \sum_{l=0}^L c_l(W_*,W_*)
    \end{bmatrix}
    \right ),
\end{talign*}
where $W_*=(\omega_{(*,1)},\ldots,\omega_{(*,n_*)})^\top$ are query locations. Applying the formula for the conditional distribution from the multivariate Gaussian distribution, given the vector values $\left( \right. f_0(W_0), f_1(W_1)-f_0(W_1), \cdots, f_L(W_L)-f_{L-1}(W_L)\left. \right)^\top$, the conditional distribution of $\sum_{l=0}^{L}\left(f_l(W_*)-f_{l-1}(W_*)\right)$ is Gaussian, with mean 
\begin{talign*}
&\sum_{l=0}^L m_l(W_*)+\begin{bmatrix}
          c_0(W_*,W_0) & c_1(W_*,W_1) &  \cdots & c_L(W_*,W_L)
    \end{bmatrix} \\
  & \hspace{3cm}   \times
    \begin{bmatrix}
         c_0(W_0,W_0) & 0 & \cdots & 0 \\
         0 & c_1(W_1,W_1) & \cdots & 0 \\
         \vdots & \vdots &  \ddots & \vdots \\
         0 & 0 & \cdots & c_L(W_L,W_L) \\
    \end{bmatrix}^{-1}
\begin{bmatrix}
         f_0(W_0)-m_0(W_0)\\
         f_1(W_1)-f_0(W_1)-m_1(W_1)\\
         \vdots\\
         f_L(W_L)-f_{L-1}(W_L)-m_L(W_L)
    \end{bmatrix} \\
& \qquad=\sum_{l=0}^L \left(m_l(W_*) + c_l(W_*,W_l)c_l(W_l,W_l)^{-1}\left(f_l(W_l)-f_{l-1}(W_l)-m_l(W_l)\right) \right)\\
& \qquad=\sum_{l=0}^L \tilde{m}_l(W_*),
\end{talign*}
and variance
\begin{talign*}
 &\sum^L_{l=0}c_l(W_*,W_*)+\begin{bmatrix}
          c_0(W_*,W_0) & c_1(W_*,W_1) &  \cdots & c_L(W_*,W_L)
    \end{bmatrix} \\
 & \hspace{4.5cm} \times
    \begin{bmatrix}
         c_0(W_0,W_0) & 0 & \cdots & 0 \\
         0 & c_1(W_1,W_1) & \cdots & 0 \\
         \vdots & \vdots &  \ddots & \vdots \\
         0 & 0 & \cdots & c_L(W_L,W_L) \\
    \end{bmatrix}^{-1}
 \begin{bmatrix}
         c_0(W_0,W_*)\\
         c_1(W_1,W_*)\\
         \vdots\\
       c_L(W_L,W_*)
    \end{bmatrix}\\
& \qquad=\sum_{l=0}^L \left(c_l(W_*,W_*) +  c_l(W_*,W_l)c_l(W_l,W_l)^{-1}c_l(W_l,W_*)\right) \\
& \qquad=\sum_{l=0}^L \tilde{c}_l(W_*,W_*).
\end{talign*}
As a result, the posterior on $\sum_{l=0}^L \left(f_l-f_{l-1}\right)$ is
$\mathcal{GP}(\sum_{l=0}^L \tilde{m}_l,\sum_{l=0}^L \tilde{c}_l)$. The posterior on $\sum_{l=0}^L\Pi[f_l-f_{l-1}]$ can be obtained (following the usual derivation for the BQ distribution on integrals) integrating the posterior mean and covariance functions and takes the form of a univariate Gaussian with mean
\begin{talign*}
   \E_{\textup{MLBQ}}[\Pi[f]] 
     &  = \sum_{l=0}^L \big( \Pi[m_l] + \Pi[c_l(\cdot,W_l)]c_l(W_l,W_l)^{-1}  (f_l(W_l)-f_{l-1}(W_l)-m_l(W_l)) \big)\\
      &= \sum_{l=0}^L \BQ[f_l -f_{l-1} ]
\end{talign*}
and variance
\begin{talign*}
    \V_{\textup{MLBQ}}[\Pi[f]] 
   & =  \sum_{l=0}^L  \big( \Pi[\Pi[c_l]]  -  \Pi[c_l(\cdot,W_l)]c_l(W_l,W_l)^{-1}\Pi[c_l(W_l,\cdot)] \big)\\
   &= \sum^{L}_{l=0}\mathbb{V}_{\textup{BQ}}[\Pi[f_l-f_{l-1}]].
\end{talign*}
\end{proof}

\subsection{Extension of Proposition~\ref{BMLMC_sum}}\label{append:correlated_case}
For the vector-valued function $(f_0,f_1-f_0,\cdots,f_L -f_{L-1})^\top$,  suppose we specify a separable kernel $C(\omega,\omega') = B c(\omega,\omega')$, where $B\in \mathbb{R}^{L\times L}$ is symmetric and positive definite with $B_{i,j}$ denoting the $(i,j)$-entry of $B$, then
 \begin{talign*}
   & \begin{bmatrix}
        f_0(W_0)\\
         f_1(W_1)-f_0(W_1)\\
         \vdots\\
         f_L(W_L)-f_{L-1}(W_L)\\
         \sum^L_{l=0}\left(f_l(W_*) -f_{l-1}(W_*)\right)
    \end{bmatrix}
    \sim
    \mathcal{N}
    \left( 
        \begin{bmatrix}
         m_0(W_0)\\
         m_1(W_1)\\
          \vdots\\
         m_L(W_L)\\
         \sum_{l=0}^L m_l(W_*)
    \end{bmatrix}, \right. \\
  &\quad \qquad \left.  \begin{bmatrix}
         B_{0,0}c(W_0,W_0) & B_{0,1}c(W_0,W_1) & \cdots & B_{0,L}c(W_0,W_L) & \sum^L_{l=0}B_{0,l}c(W_0,W_*)\\
         B_{1,0}c(W_1,W_0) & B_{1,1}c(W_1,W_1) & \cdots & B_{1,L}c(W_1,W_L) & \sum^L_{l=0}B_{1,l}c(W_1,W_*)\\
         \vdots & \vdots &  \ddots &  \vdots &  \vdots \\
         B_{L,0}c(W_L,W_0) & B_{L,1}c(W_L,W_1) & \cdots & B_{L,L}c(W_L,W_L) & \sum^L_{l=0}B_{L,l}c(W_L,W_*)\\
         \sum^L_{l=0}B_{l,0}c(W_*,W_0) & \sum^L_{l=0}B_{l,1}c(W_*,W_1) &  \cdots & \sum^L_{l=0}B_{l,L}c(W_*,W_L) & \sum^L_{l=0}\sum^L_{l'=0}B_{l,l'}c(W_*,W_*)
    \end{bmatrix}
    \right ).
\end{talign*}
Similarly, applying the formula for Gaussian conditionals, given the vector values $\left( \right. f_0(W_0), f_1(W_1)-f_0(W_1), \cdots, f_L(W_L)-f_{L-1}(W_L)\left. \right)^\top $, the conditional distribution of $\sum_{l=0}^{L}\left(f_l(W_*)-f_{l-1}(W_*)\right)$ is Gaussian, with mean 
\begin{talign*}
  \tilde{m}^B(W_*)=&\sum_{l=0}^L m_l(W_*)+\begin{bmatrix}
          \sum^L_{l=0}B_{l,0}c(W_*,W_0) & \sum^L_{l=0}B_{l,1}c(W_*,W_1) &  \cdots & \sum^L_{l=0}B_{l,L}c(W_*,W_L)
    \end{bmatrix} \\
 &  \hspace{0.1cm}  \times
    \begin{bmatrix}
         B_{0,0}c(W_0,W_0) & B_{0,1}c(W_0,W_1) & \cdots & B_{0,L}c(W_0,W_L) \\
         B_{1,0}c(W_1,W_0) & B_{1,1}c(W_1,W_1) & \cdots & B_{1,L}c(W_1,W_L) \\
         \vdots & \vdots &  \ddots &  \vdots  \\
         B_{L,0}c(W_L,W_0) & B_{L,1}c(W_L,W_1) & \cdots & B_{L,L}c(W_L,W_L) 
    \end{bmatrix}^{-1}
 \begin{bmatrix}
         f_0(W_0)-m_0(W_0)\\
         f_1(W_1)-f_0(W_1)-m_1(W_1)\\
         \vdots\\
         f_L(W_L)-f_{L-1}(W_L)-m_L(W_L)
    \end{bmatrix}
\end{talign*}
and variance
\begin{talign*}
\tilde{c}^B(W_*,W_*)= &\sum^L_{l=0}\sum^L_{l'=0}B_{l,l'}c(W_*,W_*)\\
&\hspace{0.3cm}+\begin{bmatrix}
    \sum^L_{l=0}B_{l,0}c(W_*,W_0) & \sum^L_{l=0}B_{l,1}c(W_*,W_1) & \cdots & \sum^L_{l=0}B_{l,L}c(W_*,W_L)
    \end{bmatrix} \\
  & \qquad  \times
   \begin{bmatrix}
         B_{0,0}c(W_0,W_0) & B_{0,1}c(W_0,W_1) & \cdots & B_{0,L}c(W_0,W_L) \\
         B_{1,0}c(W_1,W_0) & B_{1,1}c(W_1,W_1) & \cdots & B_{1,L}c(W_1,W_L) \\
         \vdots & \vdots & \ddots &  \vdots  \\
         B_{L,0}c(W_L,W_0) & B_{L,1}c(W_L,W_1) & \cdots & B_{L,L}c(W_L,W_L) 
    \end{bmatrix}^{-1}
 \begin{bmatrix}
         \sum^L_{l=0}B_{0,l}c(W_0,W_*)\\
         \sum^L_{l=0}B_{1,l}c(W_1,W_*)\\
         \vdots\\
        \sum^L_{l=0}B_{L,l}c(W_L,W_*)
    \end{bmatrix}.
\end{talign*}
As a result, the posterior on $\sum_{l=0}^L \left(f_l-f_{l-1}\right)$ is $\mathcal{GP}(\tilde{m}^B,\tilde{c}^B)$. Similarly, the posterior on $\sum_{l=0}^L \Pi[f_l-f_{l-1}]$ can be obtained (following the usual derivation for the BQ distribution on integrals) integrating the posterior mean and covariance functions and takes the form of a univariate Gaussian with mean
\begin{talign*}
  \sum_{l=0}^L\Pi[m_l]+&\begin{bmatrix}
          \sum^L_{l=0}B_{l,0}\Pi[c(\cdot,W_0)] & \sum^L_{l=0}B_{l,1}\Pi[c(\cdot,W_1)] &  \cdots & \sum^L_{l=0}B_{l,L}\Pi[c(\cdot,W_L)]
    \end{bmatrix} \\
  & \hspace{-0.1cm} \times
    \begin{bmatrix}
         B_{0,0}c(W_0,W_0) & B_{0,1}c(W_0,W_1) & \cdots & B_{0,L}c(W_0,W_L) \\
         B_{1,0}c(W_1,W_0) & B_{1,1}c(W_1,W_1) & \cdots & B_{1,L}c(W_1,W_L) \\
         \vdots & \vdots &  \ddots &  \vdots  \\
         B_{L,0}c(W_L,W_0) & B_{L,1}c(W_L,W_1) & \cdots & B_{L,L}c(W_L,W_L) 
    \end{bmatrix}^{-1}
 \begin{bmatrix}
         f_0(W_0)-m_0(W_0)\\
         f_1(W_1)-f_0(W_1)-m_1(W_1)\\
         \vdots\\
         f_L(W_L)-f_{L-1}(W_L)-m_L(W_L)
    \end{bmatrix}
\end{talign*}
and variance
\begin{talign*}
  \sum^L_{l=0}\sum^L_{l'=0}B_{l,l'}\Pi[\Pi[c]]+&\begin{bmatrix}
          \sum^L_{l=0}B_{l,0}\Pi[c(\cdot,W_0)] & \sum^L_{l=0}B_{l,1}\Pi[c(\cdot,W_1)] &  \cdots & \sum^L_{l=0}B_{l,L}\Pi[c(\cdot,W_L)]
    \end{bmatrix} \\
 &\hspace{-0.2cm} \times
    \begin{bmatrix}
         B_{0,0}c(W_0,W_0) & B_{0,1}c(W_0,W_1) & \cdots & B_{0,L}c(W_0,W_L) \\
         B_{1,0}c(W_1,W_0) & B_{1,1}c(W_1,W_1) & \cdots & B_{1,L}c(W_1,W_L) \\
         \vdots & \vdots &  \ddots &  \vdots  \\
         B_{L,0}c(W_L,W_0) & B_{L,1}c(W_L,W_1) & \cdots & B_{L,L}c(W_L,W_L) 
    \end{bmatrix}^{-1}
 \begin{bmatrix}
         \sum^L_{l=0}B_{0,l}\Pi[c(W_0,\cdot)]\\
         \sum^L_{l=0}B_{1,l}\Pi[c(W_1,\cdot)]\\
         \vdots\\
        \sum^L_{l=0}B_{L,l}\Pi[c(W_L,\cdot)]
    \end{bmatrix}.
\end{talign*}

An example is provided in Appendix \ref{append: Experimental_Details_PE}.

\subsection{Proof of Theorem~\ref{theo1}} \label{append:ProofTheo1}

\begin{proof}[Proof of Theorem~\ref{theo1}]
Suppose that $c$ is a covariance function such that $\mathcal{H}(c)$ is norm-equivalent to $W_2^\alpha(\Omega)$ and $f \in W_2^\beta(\Omega)$ for $\alpha \geq \beta > d/2$.
Since the density $\pi$ of $\Pi$ is bounded by Assumption~A2, we have
\begin{talign*}
  \lvert \Pi[f] - \BQ[f] \rvert = \big\lvert \int_\Omega f(\omega) \pi(\omega) d \omega - \int_\Omega \tilde{m}(\omega) \pi(\omega) d \omega \big\rvert \leq \| \pi \|_{L^\infty(\Omega)} \int_\Omega \lvert f(\omega) - \tilde{m}(\omega) \rvert d \omega,
\end{talign*}
where $\tilde{m}$ is the GP posterior mean given observations of $f$ at $n$ points $W = (\omega_1, \ldots, \omega_n)^\top$ and with $m \equiv 0$.
Because $c(\cdot, \omega) \in \mathcal{H}(c) = W_2^\alpha(\Omega) \subset W_2^\beta(\Omega)$, the posterior mean is an element of $W_2^\beta(\Omega)$.
Assumption~A1 ensures that $\Omega$ satisfies the assumptions of Theorem~4.1 in~\citet{Arcangeli2007}.
Using this theorem with $p = 2$, $q = 1$, $l = 0$, $r = \beta$, and $n = d$ gives
\begin{talign*}
  \int_\Omega \lvert f(\omega) - \tilde{m}(\omega) \rvert d \omega \leq \tilde{a} h_{W, \Omega}^{\beta} \| f - \tilde{m} \|_{\beta}
\end{talign*}
whenever $h_{W, \Omega}$ is sufficiently small.
The positive constant $\tilde{a}$ depends only on $\alpha$, $\beta$, $d$, and $\Omega$.
Theorem~4.2 in \citet{Narcowich2006} with $\mu = \beta$ and $\tau = \alpha$ and the well known identification of the GP posterior mean with the minimum-norm kernel interpolant~\citep[e.g.,][Section~3]{kanagawa2018gaussian} yield
\begin{talign*}
  \| f - \tilde{m} \|_{\beta} \leq a' \rho_{W, \Omega}^{\alpha - \beta} \| f \|_{\beta},
\end{talign*}
where $\rho_{W, \Omega} = h_{W, \Omega} / q_X$ for $q_X = \frac{1}{2} \min_{i \neq j} \| \omega_i - \omega_j \|$ is known as the mesh ratio and $a'$ is a positive constant that does not depend on $f$.
Assume that the points $W$ are quasi-uniform with the constant $h_\textup{qu} > 0$, which is to say that $h_{W, \Omega} \leq h_\textup{qu} n^{-1/d}$.
This implies that $\rho_{W, \Omega} \leq \tilde{h}_\textup{qu}$ for a constant $\tilde{h}_\textup{qu}$ that depends on $h_\textup{qu}$, $d$, and $\Omega$.
Combining all the bounds above yields
\begin{talign} \label{eq:theo1-proof-bound-1}
  \lvert \Pi[f] - \BQ[f] \rvert \leq \| \pi \|_{L^\infty(\Omega)} \tilde{a} h_{W, \Omega}^{\beta} \| f - \tilde{m} \|_{\beta} \leq \| \pi \|_{L^\infty(\Omega)} \tilde{a} h_\textup{qu}^\beta n^{-\beta/d} a' \tilde{h}_\textup{qu}^{\alpha - \beta} \| f \|_{\beta} \eqqcolon \| \pi \|_{L^\infty(\Omega)} a \| f \|_{\beta} n^{-\beta / d}
\end{talign}
for a positive constant $a$ that depends only on $\alpha$, $\beta$, $c$, $h_\textup{qu}$, $d$, and $\Omega$.
If $f \in W_2^\beta(\Omega)$ for $\beta > \alpha > d/2$, we use the fact that in this case $f \in W_2^\alpha(\Omega)$ and obtain~\eqref{eq:theo1-proof-bound-1} with $\beta = \alpha$.
Let $\tau = \min\{\alpha, \beta\}$.
We can then write
\begin{talign} \label{eq:theo1-proof-bound-2}
  \lvert \Pi[f] - \BQ[f] \rvert \leq \|\pi \|_{L^\infty(\Omega)}a \| f \|_{\tau} n^{-\tau / d}
\end{talign}
for any reals $\alpha$ and $\beta$ that exceed $d/2$.

We then apply~\eqref{eq:theo1-proof-bound-2} to each term in MLBQ error under Assumptions~A1--A6 (recall the convention $f_{-1} \equiv 0$):
\begin{talign*}
  \Err(\MLBQ) = \lvert \Pi[f_L]-\MLBQ[f_L] \rvert &= \big\lvert \Pi[f_0]+\sum^L_{l=1}\Pi[f_l-f_{l-1}]-\BQ[f_0]-\sum^L_{l=1}\BQ[f_l-f_{l-1}] \big\rvert \\
  &\leq \sum_{l=0}^L \big\lvert \Pi[f_l - f_{l-1}]-\BQ[f_l - f_{l-1}] \big\rvert \\
  &\leq \| \pi \|_{L^\infty(\Omega)} \sum_{l=0}^L a_l \|f_l-f_{l-1}\|_{\tau_l}n_l^{-\tau_l/d},
\end{talign*}
where Assumption~A4 ensures that $f_l - f_{l-1} \in W_2^{\beta_l}(\Omega) \subset W_2^{\tau_l}(\Omega)$.
\end{proof}


\subsection{Proof of Theorem \ref{theo2}} \label{append:ProofTheo2}

\begin{proof}[Proof of Theorem \ref{theo2}]
The sample sizes $n_0^{\text{MLBQ}},\ldots,n_L^{\text{MLBQ}}$  that minimize the upper bound of the absolute error of MLBQ in \Cref{theo1} with the overall cost constraint $T$ are
\begin{talign*}
    n_0^{\text{MLBQ}}, \ldots, n_L^{\text{MLBQ}} := \underset{n_0,n_1,\cdots,n_L}\argmin \sum^L_{l=0} A_l n_l^{-\frac{\tau}{d}} ~~ \quad \text{s.t.} \quad \gamma \sum^L_{l'=0}C_{l'} n_{l'} = T,
\end{talign*}
where $A_l = a \|f_l-f_{l-1}\|_{\tau}$. We note that the $\| \pi \|_{L^\infty(\Omega)} $ term can be ignored since it does not depend on the sample sizes. Similarly to the derivation for MLMC in \Cref{appendix_mlmc_optimN}, the optimisation problem above can be solved by using Lagrange multipliers. For some $\lambda>0$, we define
\begin{talign} \label{Fcase1}
    F_{\text{MLBQ}}(n_0,\ldots,n_L,\lambda)=\sum^L_{l=0} A_l n_l^{-\frac{\tau}{d}}-\lambda\big( T-\gamma\sum^L_{l'=0}C_{l'} n_{l'}\big).
\end{talign}
Differentiating $F_{\text{MLBQ}}(n_0,\ldots,n_L,\lambda)$ with respect to $n_0,\ldots,n_L,\lambda$ and setting the equations equal to $0$ gives
\begin{talign*}
   - \frac{\tau}{d} A_l n_l^{-\frac{\tau}{d}-1}+\lambda \gamma C_l=0 
   \quad \Leftrightarrow \quad
   n_l = \left( \frac{ d \lambda \gamma C_l}{\tau A_l} \right)^{-\frac{d}{\tau + d}} \quad \text{for} \quad  l \in \{0,\ldots,L\}
   \quad \text{and } \quad
    \gamma \sum^L_{l'=0}C_{l'} n_{l'} = T.
\end{talign*}
By plugging the first equation into the second, we get
\begin{talign*}
    \sum^L_{l'=0} \gamma C_{l'} \left( \frac{ d \lambda \gamma C_{l'}}{\tau A_{l'}} \right)^{-\frac{d}{\tau + d}} = T \qquad 
\Leftrightarrow \qquad
    \lambda= T^{-\frac{\tau + d}{d}}
    \left( \sum^L_{l'=0} \gamma C_{l'} \left( \frac{d\gamma C_{l'}}{\tau A_{l'}} \right)^{-\frac{d}{\tau + d}} \right)^{\frac{\tau + d}{d}}.  
\end{talign*}
Plugging this last expression for $\lambda$ into our expression for $n_l$, we get
\begin{talign*}
  n_l^{\text{MLBQ}} 
  = & \frac{T}{\gamma} 
  \left( \frac{C_l}{A_l} \right)^{-\frac{d}{\tau + d}}
  \left( \sum^L_{{l^\prime}=0}C_{l^\prime} \left( \frac{C_{l^\prime}}{A_{l^\prime}} \right)^{-\frac{d}{\tau + d}} \right)^{-1}\\
  = &\frac{T}{\gamma} \left( \frac{ \|f_l-f_{l-1}\|_{\tau}}{C_l} \right)^{\frac{d}{\tau + d}} \left( \sum^L_{{l'}=0}C_{l'}^{\frac{\tau}{\tau + d}} \left( \|f_{l'}-f_{{l'}-1}\|_{\tau} \right)^{\frac{d}{\tau + d}} \right)^{-1} \quad \text{for} \quad l \in \{0,\ldots,L\}. 
\end{talign*}
\end{proof}


\section{ADDITIONAL EXPERIMENTS}\label{appendix_setup}
In this section, we provide details of the experimental setup and additional experiments. This includes details of the three experiments in main text in \Cref{append: Experimental_Details_PE}, \Cref{appendixODE_solver} and \Cref{appendixlandslide}, additional experiments in \Cref{append: Experimental_Details_PE}, \Cref{appendixstep} and \Cref{appendixME} and the analytical formulae for the kernel mean and initial error in \Cref{appendix_kmie}.

\subsection{Experiment 1: Poisson Equation} \label{append: Experimental_Details_PE}

\paragraph{Construction of the Levels} \label{Append: FEM}
Given the specific example of Poisson equation under consideration, we are able to obtain a closed form solution to the PDE: $f(\omega) = \frac{1}{2} \omega (\omega - 1)$. We construct a piecewise linear finite element approximation $f_l$ of the solution $f$ on level $l$ as follows.
Let $p_l \in \mathbb{N}$ and $0 < \omega_{l,1} < \cdots < \omega_{l, p_l} < 1$.
Define the piecewise linear finite element basis functions as
\begin{talign*}
  v_{l, i}(\omega) = 
  \begin{cases}
    \frac{\omega - \omega_{l, i-1}}{\omega_{l,i} - \omega_{l, i-1}} \: & \text{ if } \: \omega \in [\omega_{l,i-1}, \omega_{l,i}], \\
    \frac{\omega_{l,i+1} - \omega}{\omega_{l,i+1} - \omega_{l, i}} \: & \text{ if } \: \omega \in [\omega_{l,i}, \omega_{l,i+1}], \\
    0 \: & \text{ otherwise}.
  \end{cases}
\end{talign*}
The $i$th basis function is supported on $[\omega_{l,i-1}, \omega_{l, i+1}]$.
Here we use the conventions $\omega_{l,0} = 0$ and $\omega_{l,p_l+1} = 1$.
The finite element approximation $f_l$ to $f$ is given by $f_l(\omega) = \sum_{i=1}^{p_l} a_{l,i} v_{l,i}(\omega)$, where the coefficient vector $a_l = (a_{l,1}, \ldots, a_{l,p_l})^\top \in \mathbb{R}^{p_l}$ is solved from the linear system $ -L_l a_l = g_l,$ where $L_l \in \mathbb{R}^{p_l \times p_l}$ is the tridiagonal stiffness matrix with 
\begin{talign*}
  (L_l)_{i,i} & = \int_0^1 v_{l,i}'(\omega)^2 d \omega = \frac{1}{\omega_{l,i} - \omega_{l,i-1}} + \frac{1}{\omega_{l,i+1} - \omega_{l,i}}, \qquad
  & (L_l)_{i,i-1} = (L_l)_{i-1,i} = \int_0^1 v_{l,i}'(\omega) v_{l,j}'(\omega) d\omega = -\frac{1}{\omega_{l,i} - \omega_{l, i-1}},
\end{talign*}
and the vector $g_l \in \mathbb{R}^{p_l}$ has elements $(g_l)_i = \int_0^1 f(\omega) v_{l,i}(\omega) d\omega$. Consider now the Brownian motion kernel $c_l(\omega, \omega') = \sigma_l^2 c_\textup{BM}(\omega, \omega') = \sigma_l^2 \min\{\omega, \omega'\}$, for a positive amplitude parameter $\sigma_l$, where the RKHS of the Brownian motion kernel $c_l(\omega, \omega')$  on $\Omega  = [0, 1]$  consists of functions $g_l(0) = 0$ and $f_l \in  W_2^1([0, 1])$ \citep{karvonen2020maximum}. It is straightforward to verify that a piecewise linear finite element basis function can be written in terms of the Brownian motion kernel translates:
\begin{talign*}
  \begin{split}
    v_{l,i}(\omega) ={}& -\frac{1}{\omega_{l,i} - \omega_{l,i-1}} c_\textup{BM}(\omega, \omega_{l, i-1}) + \left( \frac{1}{\omega_{l,i} - \omega_{l,i-1}} + \frac{1}{\omega_{l,i+1} - \omega_{l,i}} \right) c_\textup{BM}(\omega, \omega_{l, i}) - \frac{1}{\omega_{l,i+1} - \omega_{l,i}} c_\textup{BM}(\omega, \omega_{l, i+1})
    \end{split}
\end{talign*}
and the full finite element approximation is
\begin{talign*}
  \begin{split}
    f_l &(\omega) ={} \sum_{i=1}^{p_l} a_{l,i} v_{l,i}(\omega) \\
    ={}&
    \sum_{i=1}^{p_l} a_{l,i} \left[ -\frac{1}{\omega_{l,i} - \omega_{l,i-1}} c_\textup{BM}(\omega, \omega_{l, i-1}) + \left( \frac{1}{\omega_{l,i} - \omega_{l,i-1}} + \frac{1}{\omega_{l,i+1} - \omega_{l,i}} \right) c_\textup{BM}(\omega, \omega_{l, i}) - \frac{1}{\omega_{l,i+1} - \omega_{l,i}} c_\textup{BM}(\omega, \omega_{l, i+1}) \right] \\
    ={}&
    \left( \frac{a_{l,1}}{\omega_{l,1}} + \frac{a_{l,1} - a_{l,2}}{\omega_{l,2} - \omega_{l,1}} \right) c_\textup{BM}(\omega, \omega_{l, 1}) 
    + \left( \frac{a_{l,p_l}}{1 - \omega_{l,p_l}} + \frac{a_{l,p_l} - a_{l,p_l-1}}{\omega_{l,p_l} - \omega_{l,p_l-1}} \right) c_\textup{BM}(\omega, \omega_{l, p_l}) 
    - \frac{a_{l,p_l}}{1 - \omega_{l,p_l}} c_\textup{BM}(\omega, 1) \\
    &
    + \sum_{i=2}^{p_l-1} \left( \frac{a_{l,i} - a_{l,i-1}}{\omega_{l,i} - \omega_{l,i-1}} + \frac{a_{l,i} - a_{l,i+1}}{\omega_{l,i+1} - \omega_{l,i}}\right) c_\textup{BM}(\omega, \omega_{l, i}) ,
    \end{split}
\end{talign*}
where we have used the fact that $c_\textup{BM}(\omega, \omega_0) = c_\textup{BM}(\omega, 0) = 0$ for all $\omega \geq 0$ and $\omega_{l,p_l+1} = 1$.
For simplicity, suppose that the points are equispaced on $[0, 1]$ so that $\Delta_l = \omega_{l,1} = \omega_{l,i} - \omega_{l,i-1} = 1 - \omega_{l,p_l}$  for every $i = 1,\ldots,p_l$. Then the finite element approximation simplifies to
\begin{talign*}
  \begin{split}
    f_l(\omega) ={}& \frac{2a_{l,1} - a_{l,2}}{\Delta_l} c_\textup{BM}(\omega, \omega_{l, 1}) + \frac{2a_{l,p_l} - a_{l,p_l-1}}{\Delta_l} c_\textup{BM}(\omega, \omega_{l, p_l}) - \frac{a_{l,p_l}}{\Delta_l } c_\textup{BM}(\omega, 1) \\
    &+ \frac{1}{\Delta_l} \sum_{i=2}^{p_l-1} ( 2a_{l,i} - a_{l,i-1} - a_{l,i+1} ) c_\textup{BM}(\omega, \omega_{l, i}) \\
    ={}& 
    \frac{1}{\Delta_l} \left[ -a_{l,p_l} c_\textup{BM}(\omega, 1) + \sum_{i=1}^{p_l} ( 2a_{l,i} - a_{l,i-1} - a_{l,i+1} ) c_\textup{BM}(\omega, \omega_{l, i}) \right] \\
    ={}& 
    \frac{1}{\Delta_l \sigma_l^2} \left[ -a_{l,p_l} c_l(\omega, 1) + \sum_{i=1}^{p_l} ( 2a_{l,i} - a_{l,i-1} - a_{l,i+1} ) c_l(\omega, \omega_{l, i}) \right]
    \end{split}
\end{talign*}
where we use the convention $a_0 = a_{l,p_l+1} = 0$.
Denote $b_{l,i} = 2a_{l,i} - a_{l,i-1} - a_{l,i+1}$.
Using the above expression for $f_l$ as a sum of kernel translates and the general formula $\left\| \sum_{i=1}^n \alpha_i c(\cdot, \omega_i) \right\|_\mathcal{H}^2 = \sum_{i=1}^n \sum_{j=1}^n \alpha_i \alpha_j c(\omega_i, \omega_j)$
we are able to compute the squared RKHS norm:
\begin{talign*}
  \begin{split}
  \| f_l \|_{\mathcal{H}_l}^2 
  &= 
  \frac{1}{\Delta_l^2} \left[ a_{l,p_l}^2 c_\textup{BM}(1, 1) -2 a_{l,p_l} \sum_{i=1}^{p_l} b_{l,i} c_\textup{BM}(1, \omega_{l, i}) + \sum_{i=1}^{p_l} \sum_{j=1}^{p_l} b_{l,i} b_{l,j} c_\textup{BM}(\omega_{l,i}, \omega_{l, j}) \right] \\
  &= 
  \frac{1}{\Delta_l^2} \left[ a_{l,p_l}^2 -2 a_{l,p_l} \sum_{i=1}^{p_l} b_{l,i} \omega_{l, i} + \sum_{i=1}^{p_l} \sum_{j=1}^{p_l} b_{l,i} b_{l,j} c_\textup{BM}(\omega_{l,i}, \omega_{l, j}) \right].
  \end{split}
\end{talign*}
We can compute the norm $\| f_l - f_{l-1} \|_{\mathcal{H}_l}^2$ in a similar way.
\begin{talign*}
  \| f_l  - f_{l-1}\|_{\mathcal{H}_l}^2 
  &= 
  \| f_l\|_{\mathcal{H}_l}^2  + \| f_{l-1}\|_{\mathcal{H}_l}^2 + \frac{2}{\Delta_l\Delta_{l-1}} \Big[- a_{l,p_l}a_{l-1,p_{l-1}} + a_{l,p_l} \sum_{i=1}^{p_{l-1}} b_{l-1,i} c_\textup{BM}(1, \omega_{l-1, i})\\
  & \qquad \qquad \qquad + a_{l-1,p_{l-1}} \sum_{i=1}^{p_l} b_{l,i} c_\textup{BM}(1, \omega_{l, i}) - \sum_{i=1}^{p_{l-1}} \sum_{j=1}^{p_l} b_{l-1,i} b_{l,j} c_\textup{BM}(\omega_{l-1,i}, \omega_{l, j}) \Big].
\end{talign*}
We used a fixed grid to pick the quadrature point. This will mean that A5 is satisfied.  A1--A4 will be trivially satisfied, and A4 can be checked according to the derivation above. Since we used Mat\'ern 0.5 kernel, the smoothness is the same as the Brownian motion kernel, we discard the influence of unknown constants and calculate the optimal sample size for MLBQ.

\paragraph{Experimental Settings} 
We used a Mat\'ern kernel with smoothness $v=0.5$ and all the formulae for the kernel mean and initial error are provided in \Cref{appendix_kmie}.  The RKHS norms are given by $(\|f_0\|_{\tau},\|f_1-f_0\|_{\tau},\|f_2-f_1\|_{\tau}) = (62.5, 22.5, 3.125) \times 10^{-3}$, and the variance by $(V_0,V_1,V_2)=(1.305, 0.088, 0.002) \times 10^{-3} $. The number of evaluations at each level for different budget constraints are shown in Table~\ref{PE:samplesize}. These levels correspond to a very coarse finite element mesh, a moderately fine finite element mesh and a fine finite element mesh, respectively. The lengthscales were optimised using L-BFGS. For the illustration example shown in Figure \ref{fig:Illustration}, we used the same approximation for $f_0$ and $f_1$ as in the Poisson equation example. However, we used a more accurate approximation of $f_2$ ($C_2=168 \times 10^{-3}$ seconds) and all approximations were multiplied by 7 to make the difference between different levels more significant. For the illustration example, the number of evaluations at level 0, 1 and 2 was 16, 11 and 3 respectively. For BQ, we used 4 evaluations of $f_2$. We still used a fixed grid to pick the quadrature point in the illustration example.

\begin{table}[htbp]
\caption{Number of evaluations at level $l$ given budget constraint $T$.} \label{PE:samplesize}
 \begin{center}
\begin{tabular}{c|c|ccc}
& T  & \textbf{$l=0$}  & \textbf{$l=1$} & \textbf{$l=2$}  \\ \hline
\multirow{3}{1cm}{$n_l^{\text{MLBQ}}$} & 0.376s  & 38 & 15 & 3 \\ \cline{2-5} 
                  & 0.751s & 77 & 30 & 5\\ \cline{2-5} 
                  & 1.503s & 153 & 60 & 10\\ \hline
\multirow{3}{1cm}{$n_l^{\text{MLMC}}$} & 0.376s  & 67 & 11 & 1\\ \cline{2-5} 
                  & 0.751s &  133 & 23 & 2\\ \cline{2-5} 
                  & 1.503s & 266 & 46  & 3\\ \hline
\end{tabular}
\end{center}
\end{table}

\begin{figure}[h]
    \centering
    \includegraphics[width=0.5\textwidth]{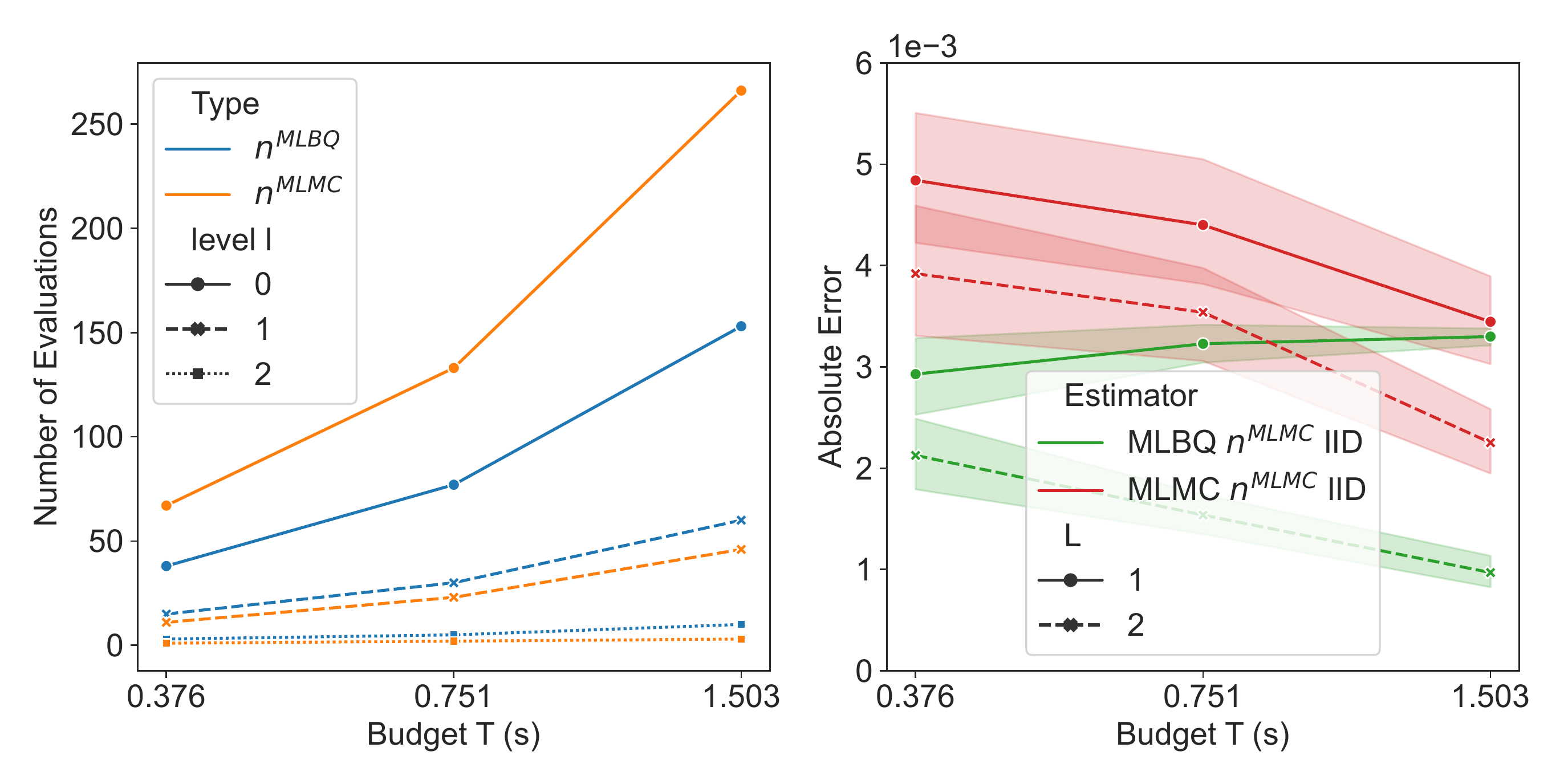}
    \caption{Poisson Equation. \textit{Left}: $n^{\text{MLMC}}$ and $n^{\text{MLBQ}}$ given budget constraint $T$. \textit{Right}:  Absolute integration error with different $L$. }
    \label{fig:PE3L_extra}
\end{figure}

\paragraph{Additional Results}  The number of evaluations at each level for different budget constraints are visualized in the left-hand side plot of Figure \ref{fig:PE3L_extra}. The right-hand plot shows the empirical mean and 95\% confidence interval of the absolute errors obtained with MLBQ and MLMC with IID points when we keep or remove the third level $\Pi[f_2-f_1]$. Benefiting from the fast convergence rate of BQ, the improvement of adding a few points in high-fidelity level (level 2) is significant for MLBQ but not for MLMC, especially when the budget constraint $T$ is small. 

\paragraph{Comparison to Multilevel Bayesian Quadrature with Separable Kernel}
We compare MLMC, MLBQ with a-priori independent $f_l-f_{l-1}$ (MLBQ),  and MLBQ with separable kernels (SK-MLBQ) (as in Appendix \ref{append:correlated_case}). We compare three different separable kernels, $B_1, B_2, B_3$, where $(B_1)_{i,i}=(B_2)_{i,i}=(B_3)_{i,i}=1$, and $(B_1)_{i,j}=0.01$, $(B_2)_{i,j}=0.05$ and $(B_3)_{i,j}=0.1$, for $i \neq j, i,j\in\{0,\ldots,L\}$. The computational cost of using SK-MLBQ depends on the budget constraint and the number of samples at each level. When $T=1.503$s, the computational cost of using SK-MLBQ with $n^{\text{MLMC}}$ is 0.374s, which is around 1.4 times that of using MLBQ with $n^{\text{MLMC}}$ (0.268s). The ratio will increase if we employ a larger budget constraint and use $n^{\text{MLBQ}}$. Figure \ref{fig:PE_SKBQ} visualizes the result of 100 repetitions of the experiment, where for each repetition, we evaluated $f_0,\ldots, f_L$ at new point sets, and used the same dataset for MLBQ, MLMC and SK-MLBQ to estimate $\Pi[f]$. 

The figure reveals that when a low cross-level correlation is established by using separable kernel $B_1$, the performance of SK-MLBQ improves slightly on that of MLBQ.  However, as the specified cross-level correlation increases, the performance of SK-MLBQ deteriorates and MLBQ outperforms SK-MLBQ when using $B_2$ and $B_3$.  Overall, this experiment shows that SK-MLBQ raises computational costs and does not ensure a significant reduction in error.

\begin{figure}[h]
    \centering
    \includegraphics[width=0.3\textwidth]{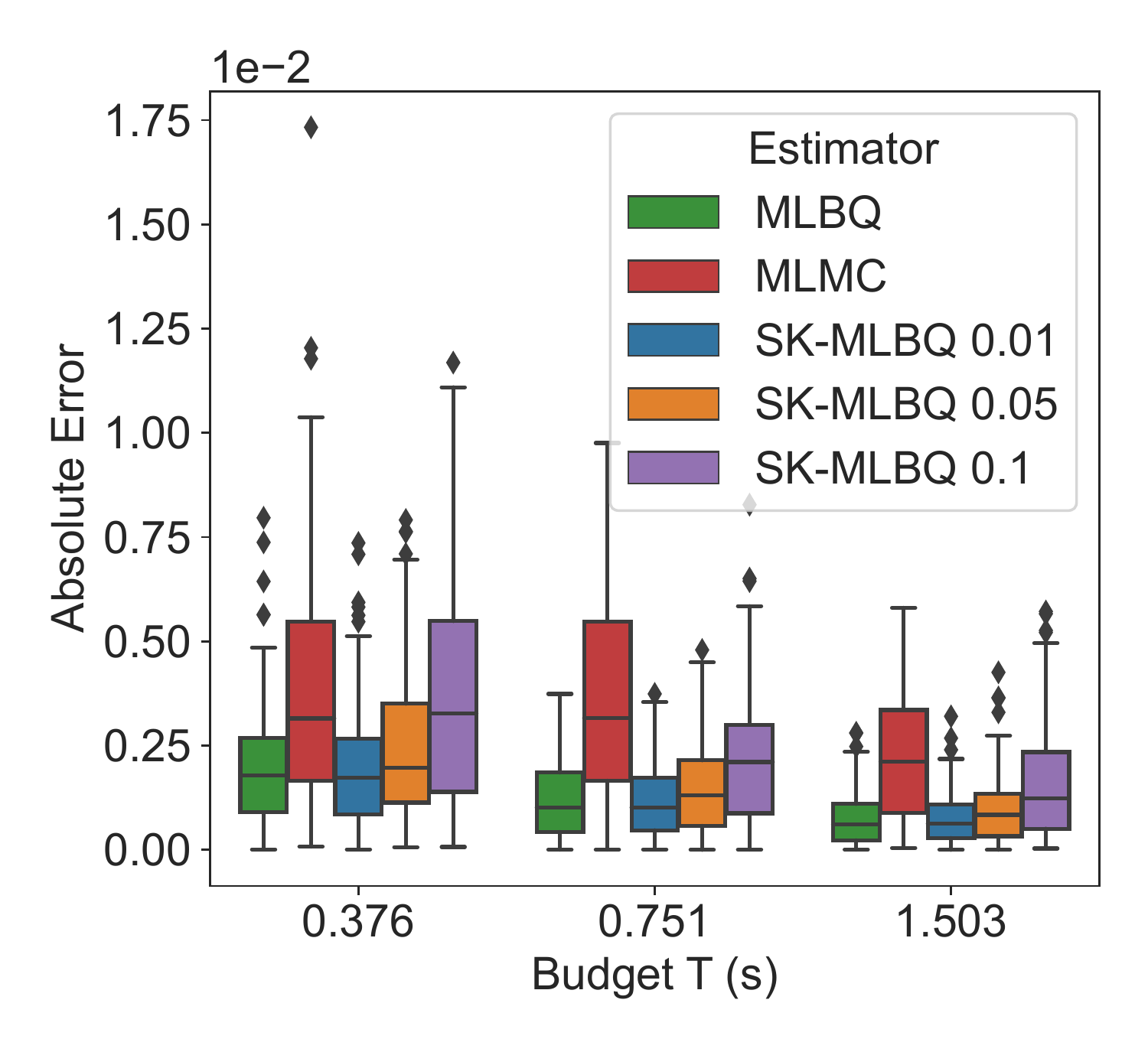}
    \caption{Poisson Equation.  Absolute integration error.  }
    \label{fig:PE_SKBQ}
\end{figure}

\subsection{Experiment 2: ODE with Random Coefficient and Forcing}\label{appendixODE_solver}

\paragraph{Construction of the Levels}  
We now provide details on the solver (finite difference approximation). We first expand the equation:
\begin{talign*}
  \frac{\mathrm{d}}{\mathrm{d} x}\left( c(x)\frac{\mathrm{d}u}{\mathrm{d} x} \right) &= -50^2 \omega_2^2\; 
   \Leftrightarrow \;
   \omega_1\frac{\mathrm{d}u}{\mathrm{d}x}+(1+\omega_1x)\frac{\mathrm{d}^2u}{dx^2}  =50\omega_2^2
\end{talign*}
 for $x\in (0,1)$. Let $u(x_i)=u(ih)$ for $i \in \{i,\ldots,(1-h)/h\}$ with $u(0)=u(1)=0$, we will approximate the left-hand side of the equation above using a finite difference approximation with spacing $h>0$:
\begin{talign*}
 \omega_1\frac{u(x_i)-u(x_{i}-h)}{h} + (1+\omega_1x_i)\frac{u(x_i+h)-2u(x_{i})+u(x_i-h)}{h^2} &  =50\omega_2^2\\ \omega_1\frac{u(x_i)-u(x_{i-1})}{h}+(1+\omega_1ih)\frac{u(x_{i+1})-2u(x_{i})+u(x_{i-1})}{h^2}&  =50\omega_2^2\\
\omega_1\frac{iu\left((i+1)h\right)-(2i-1)u(ih)+(i-1)u\left((i-1)h\right)}{h}+\frac{u\left((i+1)h\right)-2u(ih)+u\left((i-1)h\right)}{h^2}&  =50\omega_2^2
\end{talign*}
Then, bringing the random coefficient and the random forcing into consideration, the approximation at level $l$ is
\begin{talign*}
  f_l(\omega)=\sum_{i=1}^{1/h_l-1} h_lu(ih_l,\omega), 
\end{talign*}
where $u_l=(u(h_l,\omega),u(2h_l,\omega),\ldots,u(1-h_l,\omega))^\top$ can be solved from the linear system $(\omega_1 Q_l/h_l + L_l/h_l^2 ) u_l = 50\omega_2^2 \mathbf{1}$, where $\mathbf{1} \in \mathbb{R}^{(1-h_l)/h_l }$ is a vector of ones, $Q_l \in \mathbb{R}^{(1-h_l)/h_l \times (1-h_l)/h_l}$ is a tridiagonal stiffness matrix with
\begin{talign*}
  (Q_l)_{i,i}  = -2i+1, \qquad
  (Q_l)_{i,i-1}  = (Q_l)_{i-1,i} = i-1 ,
\end{talign*}
and $L_l \in \mathbb{R}^{(1-h_l)/h_l \times (1-h_l)/h_l}$ is a tridiagonal stiffness matrix with
\begin{talign*}
  (L_l)_{i,i}  = -2, \qquad (L_l)_{i,i-1}  = (L_l)_{i-1,i} = 1.
\end{talign*}

\paragraph{Experimental Setting}
Table \ref{ODE:samplesize} lists the number of evaluations at each level for multilevel estimators and BQ under different budget constraints. Three levels correspond to a very coarse ODE solver, a moderately fine ODE solver and a fine ODE solver, respectively. In this experiment, we used tensor product Mat\'ern kernel with smoothness $v=2.5$ and squared exponential kernel. The related analytical formulae are provided in 
\Cref{appendix_kmie}, and the Adam optimiser was used to select lengthscales. It is worth mentioning that the closed form of the initial error of the Mat\'ern kernel with respect to Gaussian distributed random variables doesn't exist. Since the closed form equation of the kernel mean in this case is known, we can estimate the initial error very precisely and efficiently with MC estimator by using a large number of IID samples from the Gaussian distribution.

\begin{table}[htbp]
\caption{Number of evaluations for multilevel estimators and BQ given budget constraint $T$.}\label{ODE:samplesize}
\begin{center}
\begin{tabular}{c|ccc|c}
$T$ & $l=0$  & $l=1$ & $l=2$  & \textbf{BQ} \\ \hline
0.303s & 166 & 27 & 3  & 15\\ \hline
1.517s & 830 & 135 & 15 & 75 \\ \hline
30.347s & 16579 & 2701 & 308 & /\\ \hline
151.736s & 82984 & 13505 & 1538 & /\\ \hline
\end{tabular}
 \end{center}
\end{table}

\paragraph{Additional Results} Figure \ref{fig:ODE3L_extra} shows the empirical mean and 95\% confidence interval of the absolute errors obtained with MLBQ and MLMC with IID points when we keep or remove the third level $\Pi[f_2-f_1]$. Benefiting from the fast convergence rate of BQ, the improvement of adding a few points in high-fidelity level (level 2) is significant in small budget cases (budget T = 0.303s) for MLBQ but not for MLMC.

\begin{figure}[h]
    \centering
    \includegraphics[width=0.25\textwidth]{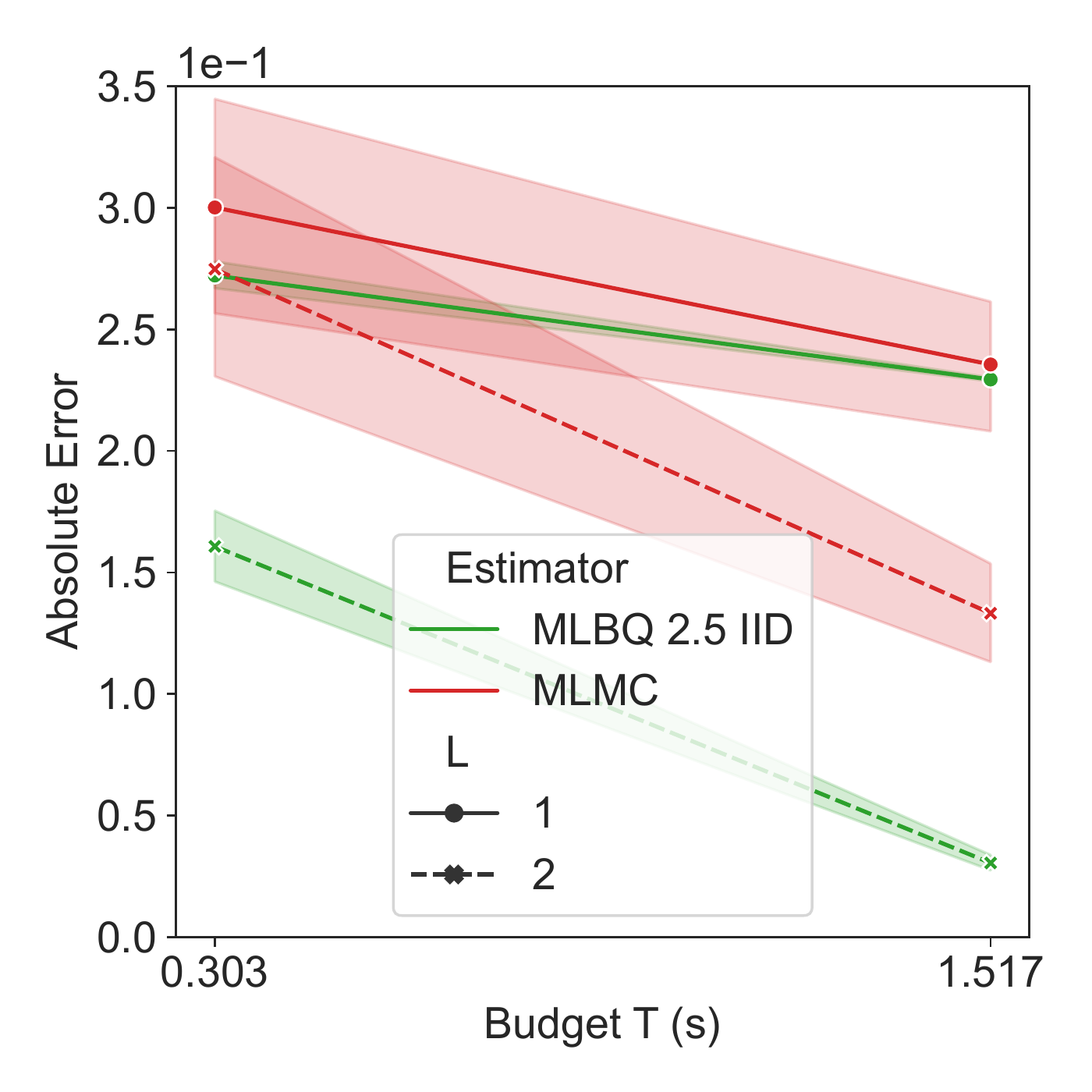}
    \caption{ODE with Random Coefficient and Forcing: Absolute integration error with different $L$.}
    \label{fig:ODE3L_extra}
\end{figure}

\subsection{Experiment 3: Landslide-Generated Tsunami}\label{appendixlandslide}

\paragraph{Construction of the Levels} 
To model the tsunami wave, Volna-OP2 \citep{giles2020performance} numerically solves the nonlinear shallow water equations:
\begin{talign*} \label{NSWE}
  \frac{\partial H}{\partial t}+\nabla \cdot (H \vec{v}) &=0, \qquad
    \frac{\partial H\vec{v}}{\partial t}+\nabla \cdot (H \vec{v}\otimes \vec{v} + \frac{g}{2}H^2I_2)=gH\nabla h,
\end{talign*}
where $\vec{v} = (u,v)$ is the depth averaged horizontal velocities, $g$ is the acceleration due to the gravity, $h$ is the underwater topography (bathymetry), $\eta$ is the wave height measured from the sea level at rest (free surface elevation) and therefore $H=h+\eta$ is the total water depth. $I_2$ denotes a $2\times2$ identity matrix. The exact form of the time dependent bathymetry of the landslide case was introduced in \cite{lynett2005numerical} but is reproduced here. The bathymetry of the sloping beach is prescribed by $h(x,t) = h_S(x,t)-x \tan(\omega_2)$, where $h_S$ is the profile of the sliding mass 
\begin{talign*}
    h_S(x,t)=\omega_1 d_o \frac{\left[1+\tanh{\left(\frac{2(x-x_l(t))}{\omega_3}\right)}\right]\left[1+\tanh{\left(\frac{2(x-x_r(t))}{\omega_3}\right)}\right]}{\left[1+\tanh{(\cos{(\omega_2)})}\right]\left[1-\tanh{(-\cos{(\omega_2)})}\right]}.
\end{talign*}
 The right and left boundaries of the slide are called $x_r(t)$ and $x_l(t)$ respectively and given by:
\begin{talign*}
x_r(t) =x_c(t)+\frac{\omega_3}{2}\cos(\omega_2), 
\qquad
x_l(t) =x_c(t)-\frac{\omega_3}{2}\cos(\omega_2),
\end{talign*}
where $x_c$ is the horizontal location of the center point of the slide. By assuming the initial depth of the center point of the slide to be $50$m, $x_c$ is given by: $x_c(t)= 50 /\tan(\omega_2)+u_s t$, where $u_s$ is the time-dependent velocity of the submerged landslide and can be calculated by
\begin{talign*}
u_s = u_t \tan\left(\frac{t}{t_0} \right),
\qquad
u_t = \sqrt{g \omega_3\frac{\pi}{2}\sin({\omega_2})},
\qquad
t_0 = u_t\frac{2}{g\sin({\omega_2})}.
\end{talign*}
Volna-OP2 uses a finite volume method with two dimensional meshes and thus the set up (Figure \ref{fig:sketch}, left) is extended in the $y$ direction, which is perpendicular to the page and results in translational symmetry along this axis. The domain of $y \in  [-\Delta x, \Delta x]$, where $\Delta x$ is the spatial resolution. A representative example of the solution as given by Volna-OP2 for $\vec{\omega}=(0.375,10^{\circ},150$ m$)$ is presented in Figure \ref{fig:snapshot}. The sub figures showcase transects of the bathymetry and surface elevation at various time points.
 
\begin{figure}[h]
    \centering
    {\includegraphics[width=0.8\linewidth]{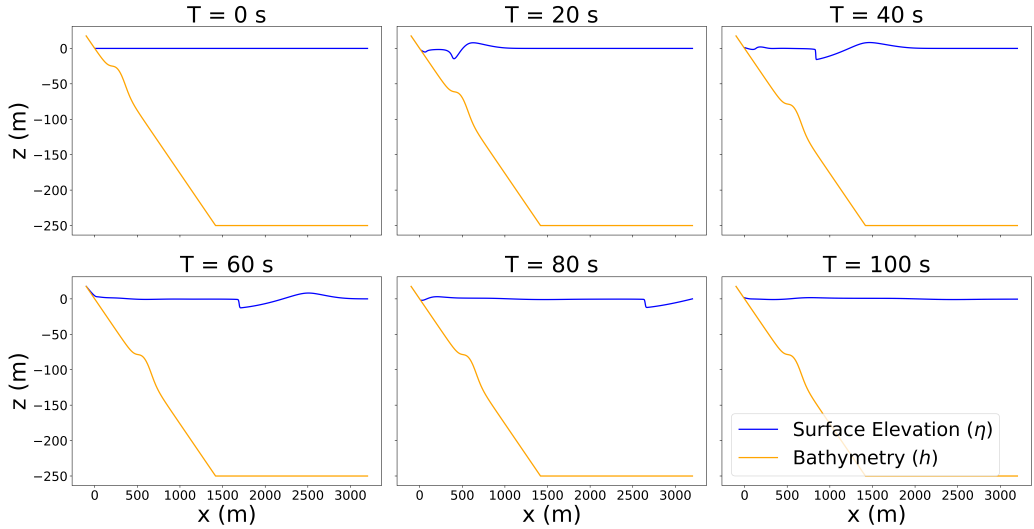}}%
    \caption{ Snapshots of the bathymetry and the solution of the PDE, surface evaluation $\eta(x,t,\omega)$ at various timestamps as given by Volna-OP2 when $\vec{\omega}=(0.375,10^{\circ},150 m)$.}
    \label{fig:snapshot}%
\end{figure}

\begin{table}[htbp]
\caption{Number of evaluations at level $l$ given budget constraint $T$.} \label{Tsu:samplesize}
\begin{center}
\begin{tabular}{c|ccccc}
$T$ & $l=0$  & $l=1$ & $l=2$ & $l=3$ & $l=4$ \\ \hline
1200s&32 &16 & 8 & 4 & 2 \\ \hline
 6000s&160 & 80 & 40 & 20  &10 \\ \hline
12000s &320 & 160 & 80 & 40  &20 \\ \hline
\end{tabular}
\end{center}
\end{table}

\paragraph{Experimental Design}
The number of evaluations at each level for different budget constraint is listed in Table \ref{Tsu:samplesize}. The lengthscales were optimised using the Adam optimiser.

\subsection{Experiment 4: Linear Function}\label{appendixstep}
In this section, we explore the impact of breaking our theoretical assumptions A1-A6 through a linear function example. Assumption~A1, A2, A6 could be generalised or replaced without affecting the convergence rate e.g. assuming bounded domain with Lipschitz boundary, and satisfying an interior cone condition \citep{wynne2021convergence} and assuming non-zero prior means \citep{teckentrup2020convergence}. Assumption A3-A5 are more crucial for obtaining the desired convergence rate. Assumption A3 is satisfied by using Mat\'ern kernels. Assumption A4 depends on the smoothness of $f_l$. If $f_l$ is infinitely differentiable, a convergence rate of MLBQ with squared exponential kernel can be derived by generalising Theorem~2.20 of \citet{karvonen2019kernel}.
Assumption A5 ensures that we use a dataset that covers the domain well, which is important to the convergence rate of MLBQ.

The integrand we consider is the following linear function,
\begin{talign*}
  f(\omega) &= \omega \: \text{ for } \:  \omega \in [0,10].
\end{talign*}
The integral of interest is $\Pi[f]=0.1\int^{10}_0 f(\omega)d\omega$, so $\Pi$ is a $\text{Unif}(0,10)$. We use step functions $f_0, f_1$ and $f_2$ as our approximations to $f$. Let $p_l \in \mathbb{N}$ and $0=\omega_{l,1}<\omega_{l,2}<\cdots<\omega_{l,p_l}=10$, then for $i=2,\ldots,p_l$
\begin{talign*}
 f_l(\omega)= 
   \begin{cases}
    \frac{\omega_{l,i-1}+\omega_{l,i}}{2} \: & \text{ if } \: \omega \in [\omega_{l,i-1},\omega_{l,i}) , \\
     \frac{\omega_{l,p_l-1}+10}{2} \: & \text{ if } \: \omega= 10.
  \end{cases}
\end{talign*}

We compare different settings to assess the impact of breaking our assumptions. Assumption~A1, A2, A4 and A6 are upheld for all settings in this example. We compare two different kernels, the Mat\'ern $0.5$ kernel and the squared exponential kernel, which violates Assumption~A3. Assumption~A5 is violated by utilizing IID points and a bad experimental design, in which 90\% of the points are sampled from $\text{Unif}(0,5)$, and the remaining 10\% are sampled from $\text{Unif}(5,10)$. We use $n^\textup{MLMC}$ for all settings in this example. Sample sizes are given in Table \ref{step:samplesize}.

\begin{table}[htbp]
\caption{Number of evaluations at level $l$ given budget constraint $T$.} \label{step:samplesize}
\begin{center}
\begin{tabular}{c|ccc}
$T$ & $l=0$  & $l=1$ & $l=2$ \\ \hline
0.002s&37 &8 & 2  \\ \hline
0.004s&74 & 15 & 4 \\ \hline
\end{tabular}
\end{center}
\end{table}

Figure \ref{fig:step} visualizes the approximations $f_0$, $f_1$ and $f_2$ and the results of 100 repetitions of the experiment. The right-hand side plot shows that MLBQ with Mat\'ern $0.5$ kernels and IID points significantly outperforms the others. When the budget is small, MLBQ with squared exponential kernels and IID points exhibits a slight advantage over MLMC. However, when the budget is increased, their performance becomes close. When the budget is small, MLMC demonstrate a slight superiority over MLBQ with Mat\'ern $0.5$ kernels and the bad experimental design, but when the budget is larger, MLBQ with Mat\'ern $0.5$ kernels and the bad experimental design performs slighly better than MLMC. MLMC always outperforms MLBQ with squared exponential kernels and the bad experimental design. This example highlights the substantial impact that the choice of kernels and experimental design have on the performance of MLBQ.

\begin{figure}
    \centering
    {\includegraphics[width=0.55\linewidth]{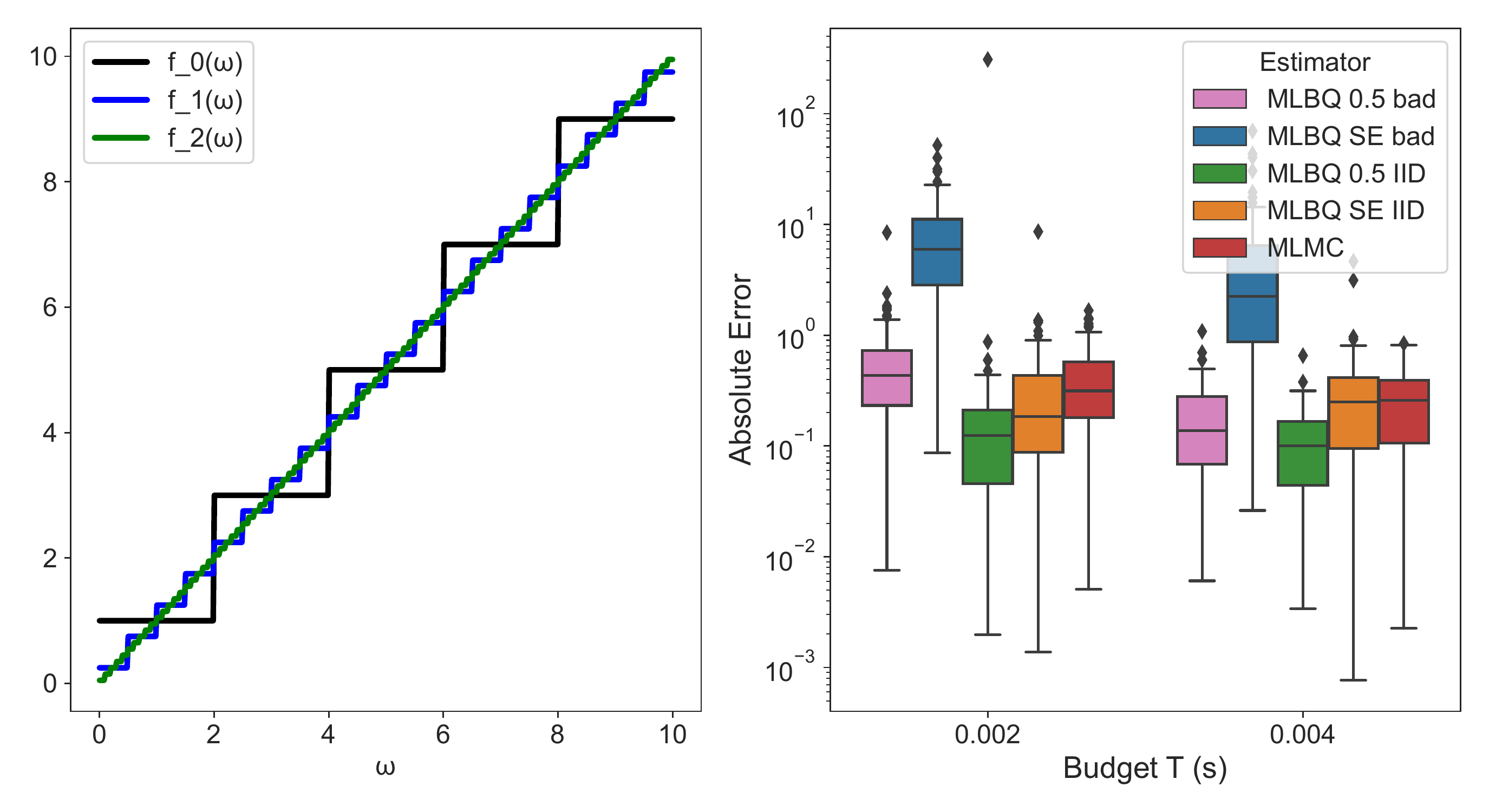}}%
    \caption{Linear Function. \textit{Left:} The approximations to $f$. \textit{Right:} Absolute integration error.}
    \label{fig:step}%
\end{figure}

 \subsection{Experiment 5: Model Evidence}\label{appendixME}

 A reviewer recommended approximating model evidence integrals based on likelihoods using datasets of different sizes. The reviewer mentioned that multilevel methods could help here, but unfortunately our initial attempts at using MLBQ in this context did not provide convincing results. This problem might therefore warrant some more efforts in future work.

\subsection{Analytical Formulae for the Kernel Means and Initial Errors}\label{appendix_kmie}

As mentioned in the main text, BQ algorithms usually require integrals of the kernel in closed-form. In this section, we provide all of the closed-form formulae used in the paper.

\paragraph{Mat\'ern Covariance Function with Smoothness $v = 1/2$}

When $\Pi$ is a uniform distribution on some interval $[a,b]$ and the covariance function is Mat\'ern covariance function with smoothness $v=0.5$ and length-scale $\gamma$, the kernel mean can be computed analytically as 
\begin{talign*}
     \Pi [c_{{1/2}}(\cdot, \omega)]&=\int^{b}_{a}  \exp\left(-\frac{|\omega-\omega^\prime|}{\gamma}\right) (b-a)^{-1} d\omega^{\prime} = (b-a)^{-1} \left( 2\gamma-\exp(\frac{a-\omega}{\gamma})\gamma-\exp(\frac{-b+\omega}{\gamma})\gamma\right),
\end{talign*}   
and the initial error can be computed analytically as
\begin{talign*}
\Pi \Pi [c_{{1/2}}(\cdot, \cdot)]&=\int^{b}_{a}\int^{b}_{a}\exp\left(-\frac{|\omega-\omega^\prime|}{\gamma}\right)(b-a)^{-1} d\omega^{\prime}(b-a)^{-1}d\omega
=2\gamma \left (b-a-\gamma+\exp\left(\frac{a-b}{\gamma}\right)\gamma\right) (b-a)^{-2}.
\end{talign*}

\paragraph{Mat\'ern Covariance Function with Smoothness $v=5/2$}

When $\Pi$ is a standard Gaussian distribution and the covariance function is Mat\'ern covariance function with smoothness $v=2.5$ and length-scale $\gamma$, the kernel mean can be computed analytically as 
\begin{talign*}
     \Pi[c_{{5/2}}(\cdot, \omega)]
     &=\int^{+\infty}_{-\infty}
     \left(1+\frac{\sqrt{5}\|\omega-\omega^\prime\|_2}{\gamma}+\frac{5\|\omega-\omega^\prime\|_2^2}{3\gamma^2}\right)
     \exp\left(-\frac{\sqrt{5}\|\omega-\omega^\prime\|_2}{\gamma}\right)
     \frac{1}{\sqrt{2\pi}}\exp\left(-\frac{1}{2}\omega^{\prime}{}^2\right)d\omega^{\prime}\\
    &=\exp \left(-\frac{\omega^2}{2}\right)
    \Big(4\sqrt{5}\gamma(-5+3\gamma^2)+\sqrt{2\pi}\Big(\exp(\frac{(\sqrt{5}-\gamma\omega)^2}{2\gamma^2})(25+3\gamma^4-10\sqrt{5}\gamma\omega\\
     &+3\sqrt{5}\gamma^3\omega+5\gamma^2(-2+\omega^2))\text{erfc}\left(\frac{\frac{\sqrt{5}}{\gamma}-\omega}{\sqrt{2}}\right)+\\
    &\exp\left(\frac{(\sqrt{5}+\gamma\omega)^2}{2\gamma^2}\right) \left(25+3\gamma^4+10\sqrt{5}\gamma\omega-3\sqrt{5}\gamma^3\omega+5\gamma^2(-2+\omega^2)\right)\text{erfc}\left(\frac{\frac{\sqrt{5}}{\gamma}+\omega}{\sqrt{2}}\right)\Big)\Big)/(6\gamma^4\sqrt{2\pi})\\
\end{talign*}
but the initial error cannot be computed analytically. 

When $\Pi$ is a uniform distribution on some interval $[a,b]$ and the covariance function is Mat\'ern covariance function with smoothness $v=2.5$ and length-scale $\gamma$, the kernel mean can be computed analytically as 
\begin{talign*}
         \Pi[c_{{5/2}}(\cdot, \omega)]
         &=\int^{b}_{a}
         \left(1+\frac{\sqrt{5}\|\omega-\omega^\prime\|_2}{\gamma}+\frac{5\|\omega-\omega^\prime\|_2^2}{3\gamma^2}\right)
         \exp\left(-\frac{\sqrt{5}\|\omega-\omega^\prime\|_2}{\gamma} \right)(b-a)^{-1}d\omega^{\prime}\\
    &=(b-a)^{-1} \Big(16\sqrt{5}\gamma-\exp\left(\frac{\sqrt{5}(a-\omega)}{\gamma}\right)\left(\frac{\sqrt{5}(8\gamma^2+5(a-\omega)^2)}{\gamma}+25(\omega-a)\right)\\
    & \qquad \qquad
+\exp\left(\frac{\sqrt{5}(-b+\omega)}{\gamma}\right)\left(-\frac{\sqrt{5}(8\gamma^2+5(b-\omega)^2)}{\gamma}+25(-b+\omega)\right) \Big)/15 .
\end{talign*}    
and the initial error can be computed analytically as
\begin{talign*}
   \Pi\Pi[c_{{5/2}}(\cdot, \cdot)]&=\int^{b}_{a}\int^{b}_{a} 
   \left(1+\frac{\sqrt{5}\|\omega-\omega^\prime\|_2}{\gamma}+\frac{5\|\omega-\omega^\prime\|_2^2}{3\gamma^2}\right)
   \exp\left(-\frac{\sqrt{5}\|\omega-\omega^\prime\|_2}{\gamma}\right)
   (b-a)^{-1}d\omega^{\prime}(b-a)^{-1}d\omega\\
    &=2 (b-a)^{-2} \left(8\sqrt{5}(b-a)\gamma-15\gamma^2+\exp\left(-\frac{\sqrt{5}(b-a)}{\gamma}\right)\left(5(b-a)^2+7\sqrt{5}(b-a)\gamma+15\gamma^2\right)\right)/15.
\end{talign*}

\paragraph{Squared Exponential Covariance Function}
When $\Pi$ is a uniform distribution on some interval $[a,b]$ and the covariance function is squared exponential with length-scale $\gamma$, the kernel mean can be computed analytically as 
  \begin{talign*}
    \Pi [c_{\textup{SE}}(\cdot, \omega)]
    &=\int^{b}_a \exp\left(-\frac{(\omega-\omega^\prime)^2}{\gamma^2}\right)(b-a)^{-1}d\omega^{\prime}\\
    &=(b-a)^{-1}\sqrt{\pi}\gamma(\text{erf}(\frac{\omega-a}{\gamma})+\text{erf}(\frac{b-\omega}{\gamma}))/2,
\end{talign*}
and the initial error can be computed analytically as
\begin{talign*}
      \Pi \Pi[c_{\textup{SE}}(\cdot, \cdot)]
      &=\int^{b}_a\int^{b}_a \exp\left(-\frac{(\omega-\omega^\prime)^2}{\gamma^2}\right)(b-a)^{-1}d\omega^{\prime}(b-a)^{-1}d\omega\\
      &=\gamma\left(\left(-1+\exp\left(-\frac{(a-b)^2}{\gamma^2}\right)\right)\gamma+(a-b)\sqrt{\pi}\text{erf}(\frac{a-b}{\gamma})\right)(b-a)^{-2}.
\end{talign*}
When $\Pi$ is a standard Gaussian distribution and the covariance function is squared exponential covariance function with length-scale $\gamma$, the kernel mean can be computed analytically as 
 \begin{talign*}
    \Pi [c_{\textup{SE}}(\cdot, \omega)]
    &=\int^{+\infty}_{-\infty}\exp(-\frac{(\omega-\omega^\prime)^2}{\gamma^2})\frac{1}{\sqrt{2\pi}}\exp(-\frac{1}{2}\omega^{\prime}{}^2)d\omega^{\prime}\\
    &=\gamma \exp(-\frac{\omega^2}{\gamma^2+2})(\gamma^2+2)^{-1/2},
\end{talign*}
and the initial error can be computed analytically as
\begin{talign*}
      \Pi \Pi[c_{\textup{SE}}(\cdot, \cdot)]
      &=\int^{+\infty}_{-\infty}\int^{+\infty}_{-\infty}\exp\left(-\frac{(\omega-\omega^\prime)^2}{\gamma^2}\right)\frac{1}{\sqrt{2\pi}}\exp(-\frac{1}{2}\omega^{\prime}{}^2)d\omega^{\prime}\frac{1}{\sqrt{2\pi}}\exp(-\frac{1}{2}\omega^2)d\omega  \\
      &=\gamma(\gamma^2+4)^{-1/2} .
\end{talign*}

\vfill

\end{document}